\newtheorem{theorem}{Theorem}[section]
\newtheorem{definition}{Definition}[section]
\newtheorem{lemma}{Lemma}[section]
\newtheorem{corollary}{Corollary}[section]
\begin{document}
\title{Sensor Array Design Through Submodular Optimization}
\author{Gal~Shulkind,~\IEEEmembership{Student Member,~IEEE,} Stefanie Jegelka\thanks{This work was supported, in part, by NSF CAREER Award 1553284.},
	Gregory~W.~Wornell,~\IEEEmembership{Fellow,~IEEE}
\thanks{This work was supported, in part, by DARPA under Contract No. HR0011-16-C-0030, and by NSF under Grant No. CCF-1319828.}%
\thanks{The authors are with the Department of Electrical Engineering
	and Computer Science, Massachusetts Institute of Technology, Cambridge,
	MA 02139 USA (e-mail: \{shulkind,stefje,gww\}@mit.edu).}}

%\ninept
%
\maketitle
\begin{abstract}
We consider the problem of far-field sensing by means of a sensor array. Traditional array geometry design techniques are agnostic to prior information about the far-field scene. However, in many applications such priors are available and may be utilized to design more efficient array topologies.

We formulate the problem of array geometry design with scene prior as one of finding a sampling configuration that enables efficient inference, which turns out to be a combinatorial optimization problem. While generic combinatorial optimization problems are NP-hard and resist efficient solvers, 
we show how for array design problems the theory of submodular optimization may be utilized to obtain efficient algorithms that are guaranteed to achieve solutions within a constant approximation factor from the optimum.
We leverage the connection between array design problems and submodular optimization and port several results of interest. We demonstrate efficient methods for designing arrays with constraints on the sensing aperture, as well as arrays respecting combinatorial placement constraints.

This novel connection between array design and submodularity suggests the possibility for utilizing other insights and techniques from the growing body of literature on submodular optimization in the field of array design.
\end{abstract}
\begin{IEEEkeywords}
Array design, Submodular optimization, Far field sensing
\end{IEEEkeywords}

\section{Introduction} \label{sec:intro}
Sensor arrays for spatial sensing are widely deployed in a wide range of applications including radar, sonar, medical imaging and radio astronomy and there is a vast literature on the topics of array design and array processing from the last century \cite{van2004detection,krim1996two}.

%%% TESTING %%%
A major goal in designing arrays is efficiently meeting sensing specifications with a limited budget of sensing elements, which is often a main determinant of system cost in terms of dimensions, weight, complexity and manufacturing cost. However, traditionally most studies assume a uniform and linear design, often a truncated half-wavelength array, or restrictions thereof. Indeed, the problem of designing non-uniform arrays hints at combinatorial optimization and is computationally intractable as we discuss later.

In beamforming arrays attaining a desired resolution level is often a primary concern, achieved by means of manipulating beam pattern parameters such as lobe widths and positions. The problem of designing efficient array geometries that facilitate desired beam patterns has been widely studied in the past. Techniques involving array thinning start with a dense uniform geometry, removing elements while maintaining performance within specified bounds \cite{haupt1994thinned}. Other approaches consider various methods such as swarm optimization \cite{khodier2005linear}, dynamic programming \cite{skolnik1964dynamic}, genetic algorithms \cite{haupt2007genetic}, inversion \cite{kumar1999design} and Bayesian compressive sampling \cite{oliveri2012complex}.

In direction-of-arrival estimation applications other specialized techniques have emerged for finding efficient array designs, such as optimizing the corresponding Cramer-Rao error bound \cite{gazzah2006cramer}, or designing according to the nested array methodology for increasing the available number of degrees of freedom \cite{vaidyanathan2011sparse}, and theoretical results for estimation performance in these settings have been developed \cite{stoica1990performance}, \cite{wang2017coarrays}, \cite{koochakzadeh2016cramer}. Alternatively, other studies have addressed array design from the point of view of performing efficient compressive sampling of a Wide Sense Stationary (WSS) process, e.g. for a setting with a parametrized covariance matrix \cite{romero2015compression}, for wide-band power spectrum estimation \cite{ariananda2012compressive} and evaluating performance in the presence of modeling errors \cite{koochakzadeh2017performance}. However, these samplers are optimized for reconstructing the second order statistics of the scene and not the actual scene realization.

In virtually all design methodologies some assumptions are made on the scene of interest. These represent beliefs, constraints or knowledge that hold over the unobserved scene. A favorable design is one meeting requirements while taking into account these assumptions. For example, in direction of arrival applications we may assume some limit on the number of point targets present in the scene \cite{gazzah2006cramer}. In beamforming we assume some separation level between objects of interest that implies a certain resolution level is required \cite{van2004detection}, or some scene sparsity structure \cite{krieger2013design,krieger2014multi}.

In this paper we study the problem of inference on a scene of interest through measurements collected by an array of sensors. The approach we take for designing array geometries is novel in that we consider settings where some Bayesian prior on the scene is available at the time of design and propose exploiting this knowledge by adapting the array geometry accordingly to achieve efficient inference with a limited budget for sensors.

Prior knowledge is in fact not a rarity at all. Frequently, the same device is used to sense multiple similar scenes, where past examples are indicative. A medical imaging device, for instance, is typically used to image the same organ across different patients.
In other cases, we may have prior knowledge in the form of scene properties such as smoothness or adherence to spatial constraints. We incorporate such knowledge as a prior in a Bayesian model, such that sensing the scene is equivalent to performing inference in this model, and the problem of array geometry design asks to select a geometry that optimizes the quality of inference. This naturally turns out to be a combinatorial optimization problem, which can be extremely hard to solve even approximately without exploiting additional structure.

We show how to choose a cost function for the array geometry design problem that holds the property of submodularity \cite{nemhauser1978analysis}. A submodular set function is one that exhibits diminishing marginal gains, i.e. adding additional elements results in diminishing benefit. Recently, there has been significant progress on the theory of optimizing submodular functions \cite{fujishige2005submodular,buchbinder2015tight,vondrak2013symmetry,mirzasoleiman2013distributed}. In particular, these results state that, while NP-hard, submodular maximization admits variants of greedy algorithms that are guaranteed to achieve near-optimal solutions, i.e. within a constant factor. Submodularity has been used in connection with sensor placement problems, for example in \cite{krause2008near,shamaiah2010greedy,krause2011submodularity}, however these works are not tailored to the far field scenes and models that we focus on here.

Our connections and formulations open avenues for leveraging those results for efficient array design with strong guarantees. 
We demonstrate this by showcasing the design of array geometries in settings with arbitrary apertures and settings where sensor placement respects matroid combinatorial constraints by porting results from the growing body of literature on submodular optimization. We demonstrate that together with  our new adaptive formulations, the exploitation of prior knowledge leads to higher quality inference at lower cost in terms of the number of sensing elements.

The paper is structured as follows: In Section \ref{sec:problem_formulation}, we review far-field sensing and introduce a statistical framework for modeling a-priori scene distributions. In Section \ref{sec:array_design}, we define a cost-function to quantify the inference loss associated with array configurations and derive a corresponding array design optimization problem. We show that this optimization problem belongs to the class of submodular optimization problems and adapt established algorithms and guarantees to our setting. In Section \ref{sec:matroid_constraints}, we review the theory of matroids and showcase its use in designing array geometries with combinatorial placement constraints. Finally, in Section \ref{sec:numerical_experiments} we summarize numerical experiments exemplifying and validating our theory. The appendix provides proofs for some of our claims.
\section{Problem formulation} \label{sec:problem_formulation}

In this section we formulate the array design problem. We focus on far-field sensing applications, although, as will become apparent, the same techniques could also be generalized to other settings where the measurement process is linear. For simplicity we consider a one dimensional setting as the extension to multiple dimensions is straightforward. We begin with a review of the far-field sensing model to establish notation, and then pose the sensing problem as a Bayesian inference problem.
\subsection{Far-Field Sensing} \label{sec:far_field_imaging} 
The far-field sensing setup is depicted in Fig.~\ref{fig:array}. At a distance from an observation axis $x$ we have a scene of interest, characterized by an illumination function $\tilde{\beta}(\theta)$ that depends on the angle $\theta{\in}[-\frac{\pi}{2},+\frac{\pi}{2}]$ between the direction of observation and the axis of measurement broadside. We are to place $N$ sensors along the observation axis at positions $\mathcal S\equiv\left\lbrace x_1,\ldots,x_{N} \right\rbrace $. 

\begin{figure}[h] 
	\begin{center}
		\includegraphics[scale=1]{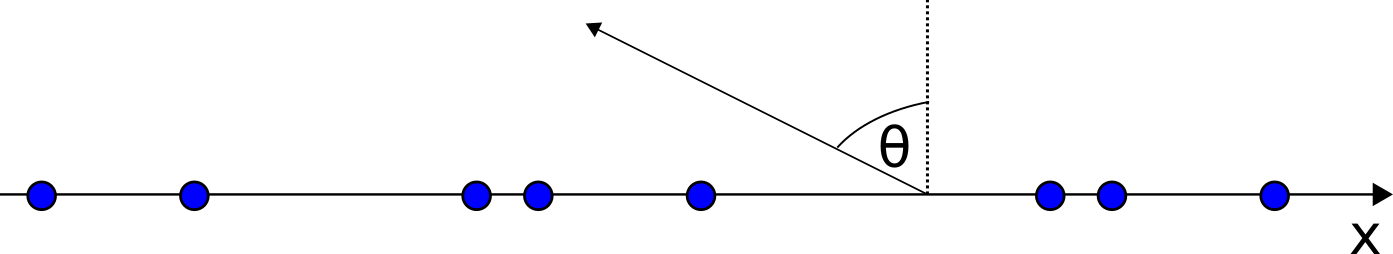}
	\end{center}
	\caption{Far-field sensing. Antenna elements are depicted as blue dots.} \label{fig:array}
\end{figure}

The passive sensors pick up the radiation emitted from the scene, which is assumed to be narrow-band around a fixed wavelength $\lambda$.
Recall that in an ideal noiseless setting, $r(x)$, the reading collected at position $x$, is given according to \cite{van2004detection}:
\begin{align} \label{eq:array_imaging_eq}
r(x){=}\int\limits_{-\frac{\pi}{2}}^{+\frac{\pi}{2}}\tilde{\beta}(\theta)e^{j\frac{2\pi}{\lambda} x\sin(\theta)}\cos(\theta)d\theta{=}\int\limits_{-\frac12}^{+\frac12}\beta(\psi)e^{j \frac{4\pi}{\lambda} x\psi}d\psi,
\end{align}
where we have introduced $\psi\equiv\frac{\sin\theta}{2}$ such that $\psi\in[-\frac12,+\frac12]$ and $\beta(\psi)\equiv\tilde{\beta}({\sin^{-1}(2\psi)})$.

Eq. \eqref{eq:array_imaging_eq} bears close resemblance to the definition of the Fourier transform. Namely, defining $f(t)$ to be the Fourier transform pair of $\beta(\psi)$
\begin{align} \label{eq:Fourier_pair_definition}
f(t)=\int\beta(\psi)e^{j2\pi t \psi}d\psi \; \mathrel{\mathop{\rightleftarrows}^{\mathrm{\mathcal{F}}}_{\mathrm{\mathcal{F}^{-1}}}} \;\beta(\psi){=}\int f(t)e^{-j2\pi t \psi}d t
\end{align}
we immediately identify, comparing \eqref{eq:array_imaging_eq} and \eqref{eq:Fourier_pair_definition} that $r(x)=f(\frac{2}{\lambda} x)$. That is, collecting measurements at $x\in\left\lbrace x_n \right\rbrace$  is equivalent to sampling the function $f(t)$ at $t\in\left\lbrace \frac{2}{\lambda}x_n \right\rbrace$. Notice that since the support of $\beta(\psi)$ is restricted to $\psi\in [-\frac12,+\frac12]$ we observe that $f(t)$ is band-limited with bandwidth $1$. Thus, as a consequence of the Whittaker-Kotelnikov-Shannon sampling theorem we have that perfect reconstruction of $f(t)$, and subsequently the scene $\beta(\psi)$, is possible from an infinite set of samples taken at $t\in\left\lbrace \ldots,-1,0,+1,\ldots\right\rbrace$ which are exactly the samples collected by an infinite $\frac{\lambda}{2}$-spaced array.

Finally, we consider the effect of noise by introducing the vector of $N$ noisy measurements $\tilde{\boldsymbol{f}}=[\tilde{f}_{1},\ldots,\tilde{f}_{N}]^{\top}$ modelled according to:
\begin{align} \label{eq:collecting_noisy_samples}
\tilde{f}_{n}\equiv r(x_n)+w_{n}=f\left(\frac{2}{\lambda}x_n\right)+w_{n} & & n=1,\ldots,N
\end{align}
where $w_{n}$ are additive noise components. Stacked into a vector $\boldsymbol{w}=[w_1,\ldots,w_{N}]^{\top}$, we assume throughout that the noise is a complex, circular, Gaussian random vector $\boldsymbol{w} \sim\mathcal{CN}(0,\boldsymbol{\Sigma}_{ww})$ \cite{neeser1993proper}. 
We assume the noise to be independent across different sensors, i.e., $\boldsymbol{\Sigma}_{ww}=\sigma^2_{w}\boldsymbol{I}_{N}$, where $\boldsymbol{I}_{N}$ is an $N{\times} N$ identity matrix such that the noise is i.i.d. across different sensors.
\subsection{Bayesian Formulation} \label{sec:statistical_description}
The sensing problem we consider in this paper entails the estimation of the illumination function $\beta(\psi)$ or equivalently its inverse Fourier transform $f(t)$ from the set of noisy measurements $\tilde{\boldsymbol{f}}$. Even in the noiseless setting this problem is gravely ill-posed as infinitely many wildly varying scenes map to any given finite set of observations\footnote{The mapping between a band-limited function $f(t)$ and a finite set of its (generally non-uniform) samples $\left\lbrace f(t_n)\right\rbrace$ is not bijective \cite{yen1956nonuniform}.}.

To cope with this ill-posedness, some prior belief or knowledge pertaining to $\beta(\psi)$ (or equivalently $f(t)$) must hence be incorporated into the model, and this could be achieved in several ways. Wingham \cite{wingham1992reconstruction} proposed selecting one specific $f(t)$ of the multiple such functions consistent with the samples, namely the minimum norm solution. Alternatively, some constraints or other preferences may be imposed on the solution by penalizing the inversion. For example one may require the solution to satisfy some constraints, e.g. lie in some pre-specified sub-space of the function space, or regularize the inversion, e.g. to promote smoothness or low total variation \cite{eldar2015sampling}.

In this study we take a Bayesian approach and impose a prior distribution on the scene $\beta(\psi)$. Subsequently, sensing is equivalent to performing inference in this model. The prior may be assigned based on past observations over the distribution of scenes or based on a-priori knowledge of scene properties.

We conceptually think of our scene as a natural image, and as such, our Bayesian formulation entails posing a prior that is consistent with its expected characteristics, tailored to the specific application that is of interest to the user. Utilizing Gaussian distributions as priors for natural scenes and images has been successfully applied in the past \cite{mihcak1999low}, \cite{levin2007image}, \cite{levin2009understanding}  and our work uses similar notions, with the Gaussian prior specifically expressed in the frequency domain (e.g., as in [9]). We note that this Gaussian description naturally captures the statistics of spread-out scenes and is not immediately suitable for serving as a prior for point-source target models. 
The formulations and methods that we develop in subsequent sections can be adapted for use with other scene priors in lieu of the Gaussian one we focus on for this study. In particular, our formulations may account for settings with point-source scenes by posing a point-process prior (e.g. a determinantal point process, DPP \cite{kulesza2012determinantal}) that is better suited for capturing priors of point-targets in specific environments. With this alternative prior we could follow similar steps to develop efficient array designs for those applications. Studying such alternative priors and resulting formulations is a possible interesting extension to our work.

In what follows, to be concrete we detail one specific choice for the form of the scene prior, expressed as a Gaussian distribution in the frequency domain:
\subsubsection{Discrete Representation}
Assigning a prior on a continuous function $\beta(\psi)$ may seem like a daunting task. However, in many scenarios this can be considerably simplified if the scene may be efficiently expanded in a countable basis of functions such that the prior may be imposed in the discrete domain of expansion coefficients. In what follows, we consider expanding $\beta(\psi)$ by means of Fourier basis functions $\left\lbrace e^{j 2\pi m\psi}\vert m=\ldots,-1,0,+1,\ldots\right\rbrace $ in $\psi\in[-\frac12,+\frac12]$. Alternatively, other expansions may be used with similar results. We write:
\begin{align} \label{eq:beta_f_expansion}
\beta(\psi)&=\sum\limits_{m} \beta_m e^{j 2\pi m\psi}, \qquad \psi\in[-\frac12,+\frac12]\\
\beta_m&\equiv\int\limits_{{-}\frac12}^{{+}\frac12} \beta(\psi) e^{-j2\pi m\psi} d \psi
\end{align}
where $\left\lbrace \beta_m\right\rbrace$ are the Fourier expansion coefficients. The usual Parseval relation holds:
\begin{align} \label{eq:parseval}
\int \vert\beta(\psi)\vert^2 d\psi = \sum\limits_{m} \vert \beta_m\vert^2
\end{align}
In lieu of the prior on $\beta(\psi)$ we will impose a prior distribution over $\left\lbrace \beta_m \right\rbrace $. As will become evident such a description is especially suited for applications involving real world smooth scenes $\beta(\psi)$ as suggested by the following properties of the Fourier series expansion \cite{Fourierprop}:
\begin{enumerate}[(a)]
	\item (Riemann-Lebesgue lemma) Let $\beta(\psi)$ be any integrable function. Then $\vert\beta_m\vert\overset{\vert m \vert\rightarrow \infty}{\longrightarrow}0$.
	\item Let $\beta(\psi)\in C^r$ where $C^r$ is the space of $r$-times continuously differentiable functions over some domain. Then $\vert\beta_m\vert\leq\frac{\alpha}{\vert m\vert^r}$ with $\alpha=\sup_\psi \vert \frac{\partial^r}{\partial\psi^r}\beta(\psi)\vert$ \cite{Fourierprop}.
\end{enumerate}
Thus, for any nicely behaved scene $\beta(\psi)$ the high frequency Fourier expansion coefficients diminish to zero with an asymptotically polynomial rate determined by the level of smoothness.
\subsubsection{Prior} \label{secsub:prior}
In the sequel we use Gaussian priors on the coefficients $\left\lbrace \beta_m\right\rbrace $:
\begin{align} \label{eq:prior_gaussian}
\beta_m \sim \mathcal{CN}(0,\sigma_m^2)
\end{align}
where $\beta_m$ are independent, complex, circular and Gaussian, and $\sigma_m^2$ are the corresponding variances, which are assumed known in advance. 
Using \eqref{eq:parseval} we define the expected scene power:
\begin{align}
P\equiv\mathbb{E}\int \vert\beta(\psi)\vert^2 d\psi=\sum\limits_{m} \sigma_m^2
\end{align}
The $\left\lbrace \sigma_m^2\right\rbrace $ can be set following some initial measurements of sample functions $\beta(\psi)$ or taking into account prior knowledge. For example if we have a-priori knowledge that $\beta(\psi)\in C^r$ we may use
\begin{align} \label{eq:prior_def}
\sigma_m^2=\left\lbrace\begin{array}{ll}
1 & m=0 \\
\frac{1}{m^{2r}} & m\neq 0
\end{array}\right.
\end{align}
which is a very simple distribution respecting the polynomial variance decay.
For the rest of the paper we adopt the prior in \eqref{eq:prior_def} and take $r=1$  to promote continuously differentiable functions.
%It can further be shown that this choice approximates the Fourier coefficient distribution associated with functions $\beta(\psi)$ which are sums over large collections of rectangular patches of random parameters which we used for our numerical experiments in section \ref{sec:simulations}.\\
Also notice that the resulting prior distribution on $\beta(\psi)$ is Wide Sense Stationary (WSS), with the coefficients $\sigma_m^2$ determining the shape of the power spectrum, based on the available prior information we have on the scene.
\subsection{Observation Model}
With the prior stated in the discrete $\left\lbrace \beta_m\right\rbrace $ domain as described above, our next goal is to circumvent $\beta(\psi)$ by directly stating the problem in terms of the measurements $\tilde{f}_n$ and the coefficients $\beta_m$, replacing the continuous representation with manageable discrete counterparts. Substituting the sum \eqref{eq:beta_f_expansion} into Equation \eqref{eq:array_imaging_eq} we have:
\begin{align} \label{eq:orig_model}
r(x_n)=\int\limits_{-\frac12}^{+\frac12}\sum\limits_{m} \beta_m e^{j 2\pi m\psi}e^{j \frac{4\pi}{\lambda}x_n\psi}d\psi =\sum\limits_{m}K_{nm} \beta_m 
\end{align}
where 
\begin{align} \label{eq:K_definition}
K_{nm}\equiv\int\limits_{-\frac12}^{+\frac12}e^{j2\pi (\frac{2}{\lambda}x_n+m) \psi } d \psi= \text{sinc}(m{+}\frac{2}{\lambda}x_n)
\end{align}
with $\text{sinc}(x)\equiv \frac{\sin(\pi x)}{\pi x}$. Plugging this into \eqref{eq:collecting_noisy_samples} we finally have the full observation model
\begin{align} \label{eq:full_model}
\tilde{f}_{n}=\sum\limits_{m}K_{nm} \beta_m+w_{n} & & \begin{array}{l}
n=0,\ldots,N-1
\end{array}
\end{align}
and the sensing problem amounts to estimating the coefficients $\left\lbrace \beta_m\right\rbrace $ given the observation vector $\tilde{\boldsymbol{f}}$. As we have assumed a Gaussian distribution for the noise vector $\boldsymbol{w}$ as well as for the prior over $\left\lbrace \beta_m\right\rbrace $ the posterior distribution $p(\left\lbrace \beta_m\right\rbrace \vert \tilde{\boldsymbol{f}})$ turns out to be Gaussian with a convenient analytic expression, as we detail in Section \ref{sec:reconstruction}.
 
\section{Array Design With an Aperture Constraint} \label{sec:array_design}
In the previous section we formulated the problem of far-field sensing in a Bayesian setting with a prior on the distribution of the underlying scene. In this section we design corresponding array geometries to facilitate efficient sensing that exploit the prior knowledge and model. Initially we consider simple constraints on the total number of sensors we can place and on the aperture where these can be situated. Specifically, we assume that up to $N$ sensors are free to be placed over some aperture $\mathcal A$ on the real line, e.g. for simplicity we can consider $\mathcal A\equiv\left\lbrace x\vert -a\leq x \leq a\right\rbrace,  a\in \mathbb{R}^+ $. In Section \ref{sec:matroid_constraints} we consider more sophisticated combinatorial placement constraints and show that the same formulations we develop here may be adapted to those more challenging use cases.
\subsection{Formulating a Cost Function}
In order to make the problem of optimal array design well-posed in our Bayesian setting we need to specify a performance measure that will be used as a cost function to compare different designs and choose the best one. Revisiting our observation model \eqref{eq:full_model} we notice that the array design determines the coefficients $K_{mn}$ through the set of sensor positions $\mathcal S$, as defined in Section \ref{sec:far_field_imaging}. With any given design the sensing problem amounts to inferring the posterior $p(\left\lbrace \beta_m\right\rbrace \vert \tilde{\boldsymbol{f}})$. A natural performance measure in this setting quantifies the quality of inference, i.e., the information gained by performing the sensing experiments which results in updating our beliefs about the coefficients $\left\lbrace \beta_m\right\rbrace$ from the prior $p(\left\lbrace \beta_m\right\rbrace)$ to the posterior $p(\left\lbrace \beta_m\right\rbrace \vert \tilde{\boldsymbol{f}})$. This problem has been extensively studied in the context of statistical inference and experimental design \cite{chaloner1995bayesian},  \cite{bernardo1979expected}. If we are interested in making general inference to learn the state of the world (represented by the posterior distribution of the Fourier coefficients) then the natural performance measure, which is referred to as Bayes D-optimality is the mutual information between the expansion coefficients and the collected measurements \cite{chaloner1995bayesian}, \cite{bernardo1979expected}:
\begin{align}
I(\tilde{\boldsymbol{f}}_{\mathcal S} ; \left\lbrace \beta_m\right\rbrace )=H(\left\lbrace \beta_m\right\rbrace)-H(\left\lbrace \beta_m\right\rbrace\vert \tilde{\boldsymbol{f}}_{\mathcal S})
\end{align}
where $I(\cdot;\cdot)$ is the mutual information and $H(\cdot)$ the Shannon entropy. The subscript $\mathcal{S}$ explicitly emphasizes the dependence of the measurements on the set of sensor positions $\mathcal{S}$.
Notice that this cost function measures the reduction in uncertainty of the scene expansion coefficients before and after measurements are made. The larger the mutual information the more we trust the values of the coefficients $\left\lbrace \beta_m\right\rbrace\vert \tilde{\boldsymbol{f}}_{\mathcal S}$.

With the cost function in place the array design problem becomes
\begin{align} \label{eq:optimization_problem}
\mathcal{S}^{\star}=\underset{\mathcal S\subseteq \mathcal A, \vert \mathcal{S}\vert \leq N }{\text{argmax}}\, I(\tilde{\boldsymbol{f}}_{\mathcal S}  ; \left\lbrace \beta_m\right\rbrace),
\end{align}
which is an NP-hard combinatorial optimization problem. However, we will show later that we can obtain a constant-factor approximation for \eqref{eq:optimization_problem} using efficient computational techniques.
\subsection{Finite Dimensional Approximation}
Solving \eqref{eq:optimization_problem} under our model \eqref{eq:full_model} involves manipulations of the infinite sequence $\left\lbrace \beta_m\right\rbrace $ which may not be amenable to computer representation. To make our formulation tractable we approximate the infinite sequence $\left\lbrace \beta_m\right\rbrace$ with a finite truncated set of coefficients $\left\lbrace \beta_m  \vert m\in \mathcal{M} \right\rbrace$, where $\mathcal{M}$ is some finite set. Stacked in vector form $\boldsymbol{\beta}$, we consider the simplified finite dimensional approximation of \eqref{eq:full_model}:
\begin{align} \label{eq:appx_model}
\hat{\boldsymbol{f}}_{\mathcal S}=\boldsymbol{K}_{\mathcal S}\boldsymbol{\beta}+\boldsymbol{w}
\end{align}
where $\boldsymbol{K}_{\mathcal S}$ is an $N{\times}\vert\mathcal{M}\vert$ matrix formed by restricting $K_{nm}$ on $m\in\mathcal{M}$ and the dependency on the sampling set $\mathcal{S}$ again made explicit. Notice that hat notation is replacing the previous tilde.

We show that the approximate finite-dimensional model \eqref{eq:appx_model} is a good proxy for the original infinite-dimensional model \eqref{eq:full_model} for a suitably selected $\mathcal{M}$. Indeed, if $\mathcal{M}$ is chosen to only exclude those $\beta_m$ coefficients that in expectation contribute a marginally small part of the energy of the full infinite sequence $\left\lbrace \beta_m\right\rbrace$ then the mutual information derived from the approximate model \eqref{eq:appx_model} will closely track that of the infinite dimensional model in \eqref{eq:full_model}. More precisely we have the following result:
\begin{lemma} \label{lemma:1}
	Let the prior on $\left\lbrace \beta_m\right\rbrace $ be i.i.d. according to $\beta_m \sim \mathcal{CN}(0,\sigma_m^2)$, $\mathcal{M}$ fixed, and $\epsilon\equiv\sum\limits_{m\notin \mathcal{M}} \sigma^2_m$ satisfies $\epsilon<\sigma_w^2N^{-\frac32}$. We have: $${-} N\log(1{+}\frac{\epsilon  N^{\frac32}}{\sigma_{w}^{2}}) {\leq} I(\tilde{\boldsymbol{f}_{\mathcal S}}  ; \left\lbrace \beta_m\right\rbrace ){-}I(\hat{\boldsymbol{f}_{\mathcal S}}  ; \boldsymbol{\beta} ) {\leq} {-} N\log(1{-}\frac{\epsilon  N^{\frac32}}{\sigma_{w}^{2}})$$
\end{lemma}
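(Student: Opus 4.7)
The plan is to leverage the linear-Gaussian structure to write both mutual informations as log-determinants, reduce their difference to the log-determinant of an identity-plus-perturbation matrix driven by the truncated coefficients, and then bound that perturbation spectrally.

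In both models the observations conditioned on the signal are distributed as $\mathcal{CN}(\boldsymbol{0},\sigma_w^2\boldsymbol{I})$, so $h(\tilde{\boldsymbol{f}}_{\mathcal{S}}\mid\{\beta_m\}) = h(\hat{\boldsymbol{f}}_{\mathcal{S}}\mid\boldsymbol{\beta}) = h(\boldsymbol{w})$ and the difference of the mutual informations collapses to the difference of output entropies. The standard Gaussian formula then yields
$$I(\tilde{\boldsymbol{f}}_{\mathcal{S}}; \{\beta_m\}) = \log\det\!\bigl(\boldsymbol{I} + \sigma_w^{-2}(\boldsymbol{A}+\boldsymbol{B})\bigr),\quad I(\hat{\boldsymbol{f}}_{\mathcal{S}}; \boldsymbol{\beta}) = \log\det\!\bigl(\boldsymbol{I} + \sigma_w^{-2}\boldsymbol{A}\bigr),$$
with $\boldsymbol{A}\equiv \boldsymbol{K}_{\mathcal{S}}\boldsymbol{\Sigma}_\beta\boldsymbol{K}_{\mathcal{S}}^H$ and $\boldsymbol{B}\equiv \sum_{m\notin\mathcal{M}}\sigma_m^2\,\boldsymbol{k}_m\boldsymbol{k}_m^H$, where $\boldsymbol{k}_m$ is the mode-$m$ column of the full sensing matrix. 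Applying the identity $\det(X+Y)/\det(X) = \det(\boldsymbol{I}+X^{-1}Y)$ with $X=\boldsymbol{I}+\sigma_w^{-2}\boldsymbol{A}$, $Y=\sigma_w^{-2}\boldsymbol{B}$ produces
$$\Delta \equiv I(\tilde{\boldsymbol{f}}_{\mathcal{S}}; \{\beta_m\}) - I(\hat{\boldsymbol{f}}_{\mathcal{S}}; \boldsymbol{\beta}) = \log\det\!\bigl(\boldsymbol{I} + \boldsymbol{M}\bigr),\quad \boldsymbol{M}\equiv\sigma_w^{-2}(\boldsymbol{I}+\sigma_w^{-2}\boldsymbol{A})^{-1}\boldsymbol{B}.$$
Since $\boldsymbol{M}$ is similar to the PSD matrix $\sigma_w^{-2}\boldsymbol{B}^{1/2}(\boldsymbol{I}+\sigma_w^{-2}\boldsymbol{A})^{-1}\boldsymbol{B}^{1/2}$, all of its eigenvalues $\mu_i$ are non-negative, so $\Delta\geq 0$ and hence $\Delta\geq -N\log(1+\epsilon N^{3/2}/\sigma_w^2)$, which supplies the lower bound of the lemma trivially.

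For the upper bound I would control $\|\boldsymbol{M}\|_{\mathrm{op}}$. Because $\boldsymbol{A}\succeq 0$, $\|(\boldsymbol{I}+\sigma_w^{-2}\boldsymbol{A})^{-1}\|_{\mathrm{op}}\leq 1$, so each $\mu_i\leq \sigma_w^{-2}\|\boldsymbol{B}\|_{\mathrm{op}}$. Using $|K_{nm}|=|\mathrm{sinc}(m+2x_n/\lambda)|\leq 1$ gives $\|\boldsymbol{k}_m\|_2^2\leq N$, and the triangle inequality on the rank-one decomposition yields
$$\|\boldsymbol{B}\|_{\mathrm{op}}\leq \sum_{m\notin\mathcal{M}}\sigma_m^2\,\|\boldsymbol{k}_m\boldsymbol{k}_m^H\|_{\mathrm{op}} \leq N\sum_{m\notin\mathcal{M}}\sigma_m^2 = N\epsilon \leq N^{3/2}\epsilon.$$
Under the hypothesis $\epsilon<\sigma_w^2 N^{-3/2}$ this places every $\mu_i$ in $[0, \epsilon N^{3/2}/\sigma_w^2)\subset[0,1)$.

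Finally, applying concavity of $\log$ to $\Delta=\sum_i\log(1+\mu_i)$ (equivalently AM--GM on $\prod_i(1+\mu_i)$), together with the elementary scalar inequality $\log(1+x)\leq -\log(1-x)$ for $x\in(0,1)$, yields
$$\Delta \leq N\log\!\bigl(1+\epsilon N^{3/2}/\sigma_w^2\bigr)\leq -N\log\!\bigl(1-\epsilon N^{3/2}/\sigma_w^2\bigr),$$
which is the upper bound of the lemma. The step I expect to require the most care is the similarity argument that transfers the PSD structure of $\boldsymbol{B}$ through the non-commuting resolvent $(\boldsymbol{I}+\sigma_w^{-2}\boldsymbol{A})^{-1}$ so that $\boldsymbol{M}$ has non-negative real eigenvalues; once that is in hand the rest of the proof reduces to routine trace/operator-norm bookkeeping on $\boldsymbol{B}$ and standard scalar inequalities.
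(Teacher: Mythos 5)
Your proof is correct, but it follows a genuinely different route from the paper's. The paper writes $\hat{\boldsymbol\Sigma}=\tilde{\boldsymbol\Sigma}+\epsilon\boldsymbol X$ with only an \emph{entrywise} bound $|X_{ij}|\le 1$, then controls the entries of $\epsilon\tilde{\boldsymbol\Sigma}^{-1}\boldsymbol X$ via the norm equivalence $\|\cdot\|_\infty\le\sqrt N\|\cdot\|_2$ and $\sigma_{\min}(\tilde{\boldsymbol\Sigma})\ge\sigma_w^2$, and finishes with the Gershgorin circle theorem to trap all eigenvalues in $[1-\epsilon N^{3/2}/\sigma_w^2,\,1+\epsilon N^{3/2}/\sigma_w^2]$, which yields both inequalities symmetrically. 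You instead exploit the fact that the truncation residual $\boldsymbol B=\sum_{m\notin\mathcal M}\sigma_m^2\boldsymbol k_m\boldsymbol k_m^H$ is positive semidefinite, which the paper never uses. This buys you a strictly stronger lower bound ($\Delta\ge 0$: truncating the prior can only reduce the mutual information, so the paper's $-N\log(1+\cdot)$ lower bound is slack), and an upper bound via operator-norm submultiplicativity that is in fact the tighter $N\log(1+\epsilon N^{3/2}/\sigma_w^2)$ before you deliberately loosen it to match the stated form; your intermediate constant $N\epsilon$ for $\|\boldsymbol B\|_{\mathrm{op}}$ even beats the paper's $N^{3/2}\epsilon$. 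The trade-off is that the paper's sign-agnostic entrywise/Gershgorin machinery is reused verbatim in the proof of Lemma \ref{lemma:2}, where the discretization perturbation $\boldsymbol\Delta\boldsymbol\Sigma_{\beta\beta}\boldsymbol K_{\mathcal S_d}^{\dagger}+\boldsymbol K_{\mathcal S_d}\boldsymbol\Sigma_{\beta\beta}\boldsymbol\Delta^{\dagger}+\boldsymbol\Delta\boldsymbol\Sigma_{\beta\beta}\boldsymbol\Delta^{\dagger}$ is \emph{not} PSD, so your argument would not transfer there without modification. One small cleanup: rather than conjugating by $\boldsymbol B^{1/2}$ (which is singular when $\boldsymbol B$ is rank-deficient), conjugate $\boldsymbol M=\boldsymbol C\boldsymbol B$ by the always-invertible $\boldsymbol C^{1/2}$ with $\boldsymbol C=\sigma_w^{-2}(\boldsymbol I+\sigma_w^{-2}\boldsymbol A)^{-1}\succ 0$ to get the PSD matrix $\boldsymbol C^{1/2}\boldsymbol B\boldsymbol C^{1/2}$; alternatively invoke that $\boldsymbol X\boldsymbol Y$ and $\boldsymbol Y\boldsymbol X$ share nonzero spectra. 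The final scalar step $\log(1+x)\le-\log(1-x)$ for $x\in(0,1)$ is valid and is exactly where the hypothesis $\epsilon<\sigma_w^2N^{-3/2}$ is needed.
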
 
\begin{proof}
	See Appendix \ref{app:proof_of_truncation_lemma}.
\end{proof}
By virtue of the last lemma and $N\log(1\pm\frac{\epsilon  N^{\frac32}}{\sigma_{w}^{2}}) \overset{\epsilon\rightarrow 0 }{\longrightarrow} 0$ we have that $I(\hat{\boldsymbol{f}_{\mathcal S}}  ; \boldsymbol{\beta} )$ is an arbitrarily accurate proxy for $I(\tilde{\boldsymbol{f}_{\mathcal S}}  ; \left\lbrace \beta_m\right\rbrace )$ for $\epsilon$ small enough, such that in lieu of problem \eqref{eq:optimization_problem} we now continue with the simplified finite dimensional approximation:
\begin{align} \label{eq:simplified_opt}
\mathcal{S}^{\star}=\underset{ \mathcal{S}\subseteq \mathcal A, \vert \mathcal{S}\vert \leq N }{\text{argmax}}\, I(\hat{\boldsymbol{f}}_{\mathcal S}  ; \boldsymbol{\beta})
\end{align}
and the results will be accurate to within the approximation bounds from Lemma \ref{lemma:1}.
\subsection{Grid Discretization}
Next we turn to discretizing the aperture $\mathcal A$ to cast the array design problem in the form of a generic finite selection problem. We thus restrict the choice of sampling positions to the finite set
\begin{align}
\mathcal{V}\equiv\left\lbrace x^{1},\ldots,x^{\vert \mathcal{V} \vert}\right\rbrace \subset \mathcal A
\end{align} 
For the sequel we take $\mathcal{V}$ to be a uniform $\delta$-spaced grid of positions in $\mathcal A$. We adapt the array design problem \eqref{eq:simplified_opt} accordingly as:
\begin{align} \label{eq:trunc_disc_opt}
	\mathcal{S}^{\star}_d=\underset{\mathcal S\subseteq \mathcal{V}, \vert \mathcal S\vert \leq N }{\text{argmax}}\, I(\hat{\boldsymbol{f}}_{\mathcal S}  ; \boldsymbol{\beta})
\end{align}
with the subscript $d$ implying discretization. The next result can be used to quantify the  level of discretization $\delta$ necessary to guarantee performance close to optimum within some specified error bound:
\begin{lemma} \label{lemma:2}
	With $\mathcal{V}$ a uniform grid of sampling positions with distance $\delta$ between adjacent positions we have:
  \begin{align*}
  I(\hat{\boldsymbol{f}}_{\mathcal{S}_d^{\star}}  ; \boldsymbol{\beta}) \leq I(\hat{\boldsymbol{f}}_{\mathcal{S}^{\star}}  ; \boldsymbol{\beta})\leq I(\hat{\boldsymbol{f}}_{\mathcal{S}_d^{\star}}  ; \boldsymbol{\beta}) {+} N\log(1{+}\frac{4\delta P (1{+}\delta) N^{\frac32}}{\lambda\sigma_{w}^{2}})
  \end{align*}
\end{lemma}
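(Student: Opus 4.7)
The plan is to establish the two inequalities separately. The lower bound $I(\hat{\boldsymbol{f}}_{\mathcal{S}_d^\star};\boldsymbol{\beta})\le I(\hat{\boldsymbol{f}}_{\mathcal{S}^\star};\boldsymbol{\beta})$ is immediate because $\mathcal{V}\subseteq\mathcal{A}$, so $\mathcal{S}_d^\star$ is a feasible competitor in the continuous problem \eqref{eq:simplified_opt}. For the upper bound I would use a rounding argument: given $\mathcal{S}^\star=\{x_1^\star,\ldots,x_N^\star\}$, snap each sensor to its closest grid point to obtain a set $\mathcal{S}'\subseteq\mathcal{V}$ with $|\mathcal{S}'|\le N$ and $|x_n^\star-x_n'|\le\delta$ for every $n$. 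By the optimality of $\mathcal{S}_d^\star$ among grid configurations, $I(\hat{\boldsymbol{f}}_{\mathcal{S}'};\boldsymbol{\beta})\le I(\hat{\boldsymbol{f}}_{\mathcal{S}_d^\star};\boldsymbol{\beta})$, so the problem reduces to controlling $I(\hat{\boldsymbol{f}}_{\mathcal{S}^\star};\boldsymbol{\beta})-I(\hat{\boldsymbol{f}}_{\mathcal{S}'};\boldsymbol{\beta})$ by a function of the rounding displacement $\delta$.

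Under the Gaussian model \eqref{eq:appx_model} the mutual information admits the closed form $I(\hat{\boldsymbol{f}}_\mathcal{S};\boldsymbol{\beta})=\log\det(\boldsymbol{I}_N+\sigma_w^{-2}\boldsymbol{K}_\mathcal{S}\boldsymbol{\Sigma}_\beta\boldsymbol{K}_\mathcal{S}^H)$, with $\boldsymbol{\Sigma}_\beta=\mathrm{diag}\{\sigma_m^2\}_{m\in\mathcal{M}}$ the prior covariance of $\boldsymbol{\beta}$. Writing $\boldsymbol{A}^\star=\boldsymbol{K}_{\mathcal{S}^\star}\boldsymbol{\Sigma}_\beta\boldsymbol{K}_{\mathcal{S}^\star}^H$ and $\boldsymbol{A}'=\boldsymbol{K}_{\mathcal{S}'}\boldsymbol{\Sigma}_\beta\boldsymbol{K}_{\mathcal{S}'}^H$, the Hermitian spectral bound $\boldsymbol{A}^\star\preceq\boldsymbol{A}'+\|\boldsymbol{A}^\star-\boldsymbol{A}'\|\,\boldsymbol{I}_N$, together with the elementary relation $(1+c)\boldsymbol{I}+\boldsymbol{X}\preceq(1+c)(\boldsymbol{I}+\boldsymbol{X})$ for $\boldsymbol{X}\succeq 0$, $c\ge 0$, yields through the monotonicity of $\log\det$:
\begin{equation*}
\log\det(\boldsymbol{I}+\sigma_w^{-2}\boldsymbol{A}^\star)\le N\log\bigl(1+\sigma_w^{-2}\|\boldsymbol{A}^\star-\boldsymbol{A}'\|\bigr)+\log\det(\boldsymbol{I}+\sigma_w^{-2}\boldsymbol{A}').
\end{equation*}
It therefore suffices to show $\|\boldsymbol{A}^\star-\boldsymbol{A}'\|\le \frac{4\delta P(1+\delta)N^{3/2}}{\lambda}$.

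To achieve this, set $\Delta\boldsymbol{K}=\boldsymbol{K}_{\mathcal{S}^\star}-\boldsymbol{K}_{\mathcal{S}'}$ and expand $\boldsymbol{A}^\star-\boldsymbol{A}'=\Delta\boldsymbol{K}\,\boldsymbol{\Sigma}_\beta\boldsymbol{K}_{\mathcal{S}'}^H+\boldsymbol{K}_{\mathcal{S}'}\boldsymbol{\Sigma}_\beta\Delta\boldsymbol{K}^H+\Delta\boldsymbol{K}\,\boldsymbol{\Sigma}_\beta\Delta\boldsymbol{K}^H$; the two cross terms are linear in $\Delta\boldsymbol{K}$ and the last one quadratic, which is what accounts for the $(1+\delta)$ factor in the target bound. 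To control $\Delta\boldsymbol{K}$ entrywise I would use the integral form $K_{nm}(x)=\int_{-1/2}^{1/2}e^{j2\pi(m+2x/\lambda)\psi}d\psi$ to obtain $|(\Delta\boldsymbol{K})_{nm}|\le\int_{-1/2}^{1/2}|e^{j4\pi(x_n^\star-x_n')\psi/\lambda}-1|\,d\psi=O(\delta/\lambda)$ uniformly in $m$. Combined with $\sum_m\sigma_m^2=P$ and $|K_{nm}|\le 1$, this produces Frobenius estimates $\|\Delta\boldsymbol{K}\,\boldsymbol{\Sigma}_\beta^{1/2}\|_F^2=O(\delta^2 NP/\lambda^2)$ and $\|\boldsymbol{K}_{\mathcal{S}'}\boldsymbol{\Sigma}_\beta^{1/2}\|_F^2\le NP$, and submultiplicativity then delivers the needed spectral bound on $\boldsymbol{A}^\star-\boldsymbol{A}'$.

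The main obstacle will be this last step: recovering the precise constants (the leading $4$, the $(1+\delta)$ shape, and specifically the $N^{3/2}$ scaling rather than the $N$ one obtains by a naive Cauchy--Schwarz) requires a careful interplay between operator and Frobenius norms, together with full use of the small-argument inequality $|e^{j\phi}-1|\le|\phi|$ to suppress the $\delta/\lambda$ factor inside the Fourier integral. The surrounding architecture---round to the grid, convert the inference-quality gap to a $\log\det$ perturbation, and bound that perturbation via the spectral norm of $\boldsymbol{A}^\star-\boldsymbol{A}'$---is otherwise conceptually routine.
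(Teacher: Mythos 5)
Your proposal is correct, and its overall architecture---round $\mathcal{S}^{\star}$ to the grid, reduce the mutual-information gap to a $\log\det$ perturbation via $I(\hat{\boldsymbol{f}}_{\mathcal S};\boldsymbol{\beta})=\log\det(\boldsymbol{I}_N+\sigma_w^{-2}\boldsymbol{K}_{\mathcal S}\boldsymbol{\Sigma}_{\beta\beta}\boldsymbol{K}_{\mathcal S}^{\dagger})$, and control that perturbation through the decomposition of $\boldsymbol{A}^{\star}-\boldsymbol{A}'$ into two cross terms and a quadratic term---is exactly the paper's. Where you genuinely diverge is the last step. The paper bounds the covariance perturbation \emph{entrywise} by $\epsilon'\equiv\frac{4}{\lambda}\delta P(1+\delta)$, writes $\hat{\boldsymbol{\Sigma}}^{\mathcal S}=\hat{\boldsymbol{\Sigma}}^{\mathcal{S}_d}+\epsilon'\boldsymbol{X}$ with $\lvert X_{ij}\rvert\leq 1$, and applies Gershgorin to $\boldsymbol{I}_N+\epsilon'(\hat{\boldsymbol{\Sigma}}^{\mathcal{S}_d})^{-1}\boldsymbol{X}$ using $\|(\hat{\boldsymbol{\Sigma}}^{\mathcal{S}_d})^{-1}\|_{\infty}\leq\sqrt{N}\,\|(\hat{\boldsymbol{\Sigma}}^{\mathcal{S}_d})^{-1}\|_{2}\leq\sqrt{N}/\sigma_w^2$; the $N^{3/2}$ in the statement is exactly that $\sqrt{N}$ multiplied by the $N$ terms in each Gershgorin row sum. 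Your route---$\boldsymbol{A}^{\star}\preceq\boldsymbol{A}'+\|\boldsymbol{A}^{\star}-\boldsymbol{A}'\|\boldsymbol{I}$ followed by $(1+c)\boldsymbol{I}+\boldsymbol{X}\preceq(1+c)(\boldsymbol{I}+\boldsymbol{X})$ and monotonicity of $\log\det$---is valid and arguably cleaner, and it dissolves the ``main obstacle'' you flag at the end: you do \emph{not} need to recover the $N^{3/2}$ scaling. The same entrywise estimates the paper uses (or your sharper $\lvert e^{j\phi}-1\rvert\leq\lvert\phi\rvert$ inside the Fourier integral) give $\|\boldsymbol{A}^{\star}-\boldsymbol{A}'\|_2\leq N\epsilon'$ via either the Frobenius norm of the full difference or your factorization through $\boldsymbol{\Sigma}_{\beta\beta}^{1/2}$, so you end up with $N\log(1+N\epsilon'/\sigma_w^2)$, a \emph{stronger} bound that trivially implies the stated one since $N\leq N^{3/2}$. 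Two small points of care: round to the \emph{nearest} grid point so that $\lvert x_n^{\star}-x_n'\rvert\leq\delta/2$ (not $\delta$, as written), since that half is what combines with the Lipschitz-type constant to produce the leading factor $4$; and note that both you and the paper silently assume distinct sensors round to distinct grid points, a harmless technicality for small $\delta$.
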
 
\begin{proof}
	See Appendix \ref{app:proof_of_discretization_lemma}.
\end{proof}
With this last lemma in place the array design problem \eqref{eq:simplified_opt} may be further approximated in the more convenient finite combinatorial problem form of \eqref{eq:trunc_disc_opt} with guarantees on the accuracy of the resulting designs. In the sequel we assume that $\delta$ is chosen such as to meet desired accuracy levels, as prescribed in Lemma \ref{lemma:2}, and work with the simplified formulation \eqref{eq:trunc_disc_opt}.
\subsection{Sensing} \label{sec:reconstruction}
With the observation model of \eqref{eq:appx_model}, coupled with a Gaussian distribution for the noise vector $\boldsymbol{w}$ and a Gaussian prior for the coefficients vector $\boldsymbol{\beta}$, calculation of the posterior distribution $\boldsymbol \beta\vert \hat{\boldsymbol f}_{\mathcal S}$ is particularly simple and can be performed analytically.
Concretely, we have as a result of all random variables being Gaussian $\boldsymbol \beta\vert \hat{\boldsymbol f}_{\mathcal S} \sim \mathcal{CN}(\hat{\boldsymbol\mu},\hat{\boldsymbol\Sigma})$ and the parameters are given according to the conventional Gaussian conditional parameters:
\begin{align}
\hat{\boldsymbol\mu}&= \boldsymbol \Sigma_{\beta\hat{f}}\boldsymbol\Sigma^{-1}_{\hat f \hat f}\hat{\boldsymbol f}_{\mathcal S}\notag\\
\hat{\boldsymbol\Sigma}&=\boldsymbol \Sigma_{\beta\beta}-\boldsymbol \Sigma_{\beta\hat{f}}\boldsymbol\Sigma^{-1}_{\hat f \hat f} \boldsymbol\Sigma_{\beta\hat{f}}^{\dagger}
\end{align}
where $\boldsymbol \Sigma_{\beta\hat{f}}=\boldsymbol\Sigma_{\beta\beta}\boldsymbol {K}_S^{\dagger},\quad \boldsymbol\Sigma_{\hat f \hat f}=\boldsymbol {K}_{\mathcal S}\boldsymbol\Sigma_{\beta\beta} \boldsymbol {K}_{\mathcal S} ^{\dagger}+\boldsymbol\Sigma_{ww}$ and $\boldsymbol\Sigma_{\beta\beta}=\text{diag}[\sigma_1^2,\ldots,\sigma_{\mathcal{M}}^2]$.
\subsection{Submodular Optimization}
The next step is to prescribe an efficient algorithm for the solution of \eqref{eq:trunc_disc_opt}. As we will show shortly \eqref{eq:trunc_disc_opt} is an instance of a known NP-hard problem such that it is widely believed that no efficient algorithm for its solution exists. However, due to the structure of the cost function an efficient approximation algorithm is known to exist with strong theoretical guarantees. In this subsection we survey the relevant results and adapt them to our needs.
\subsubsection{Submodularity}
We begin by invoking the submodularity property of set functions \cite{nemhauser1978analysis}:
\begin{definition}
	Let $G: 2^{\mathcal{V}}\rightarrow \mathbb{R}$ be a set function.\\
	(a) G  is \textbf{submodular} if it satisfies the property of decreasing marginals: $\; \forall \mathcal{S},\mathcal{T} \subseteq \mathcal{V}$ such that $\mathcal{S} \subseteq T$ and $x\in \mathcal{V}\backslash T$ it holds that $G(\mathcal{S}\cup\left\lbrace x \right\rbrace )-G(\mathcal S) \geq G(\mathcal T\cup\left\lbrace x \right\rbrace )-G(\mathcal T)$.\\
	(b) $G$ is \textbf{monotonic} (increasing) if $\; \forall \mathcal{S},\mathcal{T} \subseteq \mathcal{V}$ s.t. $\mathcal{S}\subseteq \mathcal{T}$ we have $G(\mathcal{S})\leq G(\mathcal{T})$.
\end{definition}
As it turns out, our cost function is monotonic and submodular as the next result shows (this is almost identical to corollary 4 in \cite{krause2012near}. We reproduce it here with adaptations to our setting for completeness):
\begin{lemma}
	Let $\mathcal{V}$ be defined as before, and define the set function $G: 2^{\mathcal{V}}\rightarrow \mathbb{R}$ according to $G(\mathcal S)=I(\hat{\boldsymbol{f}}_{\mathcal S}  ; \boldsymbol{\beta})$. Then $G$ is submodular and monotonic (increasing).
\end{lemma}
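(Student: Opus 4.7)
The plan is to exploit two standard facts: (i) mutual information expands as $G(\mathcal S)=H(\hat{\boldsymbol f}_{\mathcal S})-H(\hat{\boldsymbol f}_{\mathcal S}\mid\boldsymbol\beta)$, and (ii) the noise $\boldsymbol w$ is independent across sensors, so the observations $\{\hat f_x\}_{x\in\mathcal S}$ are conditionally independent given $\boldsymbol\beta$. Under (ii) the conditional entropy term is additive (modular): $H(\hat{\boldsymbol f}_{\mathcal S}\mid\boldsymbol\beta)=\sum_{x\in\mathcal S}H(\hat f_x\mid\boldsymbol\beta)=|\mathcal S|\log(\pi e\sigma_w^2)$. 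Thus submodularity and monotonicity of $G$ reduce to those same properties for $\mathcal S\mapsto H(\hat{\boldsymbol f}_{\mathcal S})$.

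For monotonicity I would observe that the marginal gain from appending a single new location $x\notin\mathcal S$ is
\begin{align*}
G(\mathcal S\cup\{x\})-G(\mathcal S)=I(\hat f_x;\boldsymbol\beta\mid\hat{\boldsymbol f}_{\mathcal S})\geq 0,
\end{align*}
since mutual information is nonnegative. Iterating this along any chain $\mathcal S\subseteq\mathcal T$ gives $G(\mathcal S)\leq G(\mathcal T)$.

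For submodularity I would take $\mathcal S\subseteq\mathcal T\subseteq\mathcal V$ and $x\in\mathcal V\setminus\mathcal T$ and rewrite both marginal gains as conditional mutual informations:
\begin{align*}
G(\mathcal S\cup\{x\})-G(\mathcal S)&=H(\hat f_x\mid\hat{\boldsymbol f}_{\mathcal S})-H(\hat f_x\mid\boldsymbol\beta,\hat{\boldsymbol f}_{\mathcal S}),\\
G(\mathcal T\cup\{x\})-G(\mathcal T)&=H(\hat f_x\mid\hat{\boldsymbol f}_{\mathcal T})-H(\hat f_x\mid\boldsymbol\beta,\hat{\boldsymbol f}_{\mathcal T}).
\end{align*}
Conditional independence of $\hat f_x$ and $\hat{\boldsymbol f}_{\mathcal S},\hat{\boldsymbol f}_{\mathcal T}$ given $\boldsymbol\beta$ (the sensor at $x$ has its own independent noise) collapses the two subtracted terms to the common value $H(\hat f_x\mid\boldsymbol\beta)=\log(\pi e\sigma_w^2)$. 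Hence it suffices to show $H(\hat f_x\mid\hat{\boldsymbol f}_{\mathcal S})\geq H(\hat f_x\mid\hat{\boldsymbol f}_{\mathcal T})$, which is the standard ``conditioning reduces entropy'' inequality applied to the nested pair $\hat{\boldsymbol f}_{\mathcal S}\subseteq\hat{\boldsymbol f}_{\mathcal T}$.

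Neither step is really an obstacle; the only point one has to be careful about is the conditional-independence step, which is where the noise independence assumption $\boldsymbol\Sigma_{ww}=\sigma_w^2\boldsymbol I_N$ (stated in Section~\ref{sec:far_field_imaging}) is essential—without it the conditional entropy $H(\hat{\boldsymbol f}_{\mathcal S}\mid\boldsymbol\beta)$ would not decompose and submodularity could fail. I would therefore call out this assumption explicitly in the proof and then cite \cite{krause2012near} for the general monotone submodularity of $\mathcal S\mapsto I(Y_{\mathcal S};X)$ under conditionally independent observations, noting that the Gaussian structure is incidental to the argument.
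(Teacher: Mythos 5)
Your proposal is correct and follows essentially the same route as the paper: expand $I$ as $H(\hat{\boldsymbol f}_{\mathcal S})-H(\hat{\boldsymbol f}_{\mathcal S}\mid\boldsymbol\beta)$, use the conditional independence of the observations given $\boldsymbol\beta$ (from the i.i.d.\ noise) to collapse the subtracted terms, and finish with ``conditioning reduces entropy''; your monotonicity step via nonnegativity of $I(\hat f_x;\boldsymbol\beta\mid\hat{\boldsymbol f}_{\mathcal S})$ is the same chain-rule identity the paper writes as a difference of conditional entropies of $\boldsymbol\beta$.
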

\begin{proof}
Expanding the mutual information according to $I(x;y)=H(x){-}H(x\vert y)$ we have:
\begin{align}
&G(\mathcal{S}{\cup}\left\lbrace x \right\rbrace ){-}G(\mathcal{S})=H(\hat{\boldsymbol{f}}_{\mathcal{S}\cup\left\lbrace x \right\rbrace}){-}H(\hat{\boldsymbol{f}}_{\mathcal S})\notag\\
&{-}[H(\hat{\boldsymbol{f}}_{\mathcal S\cup\left\lbrace x \right\rbrace}\vert \boldsymbol{\beta} ){-}H(\hat{\boldsymbol{f}}_{\mathcal S}\vert \boldsymbol{\beta})]=H(\hat{\boldsymbol{f}}_{\left\lbrace x \right\rbrace}\vert \hat{\boldsymbol{f}}_{\mathcal S}){-}H(\hat{\boldsymbol{f}}_{\left\lbrace x \right\rbrace}\vert \boldsymbol{\beta})
\end{align}
where in the last equality we used the conditional independence of the components of $\hat{\boldsymbol{f}}_{\mathcal S\cup\left\lbrace x \right\rbrace}$ given $\boldsymbol{\beta}$. Substituting $T$ for $\mathcal S$ we immediately get:
\begin{align}
&[G(\mathcal S{\cup}\left\lbrace x \right\rbrace ){-}G(\mathcal S)]{-}[G(\mathcal T\cup\left\lbrace x \right\rbrace ){-}G(\mathcal T)]\notag\\
&=H(\hat{\boldsymbol{f}}_{\left\lbrace x \right\rbrace}\vert \hat{\boldsymbol{f}}_{\mathcal S}){-}H(\hat{\boldsymbol{f}}_{\left\lbrace x \right\rbrace}\vert \hat{\boldsymbol{f}}_{\mathcal T})
\end{align}
Using $\mathcal S\subseteq \mathcal T$ we have $H(\hat{\boldsymbol{f}}_{\left\lbrace x \right\rbrace}\vert \hat{\boldsymbol{f}}_{\mathcal S})\geq H(\hat{\boldsymbol{f}}_{\left\lbrace x \right\rbrace}\vert \hat{\boldsymbol{f}}_{\mathcal T})$
such that $G(\mathcal S{\cup}\left\lbrace x \right\rbrace ){-}G(\mathcal S)\geq G(T{\cup}\left\lbrace x \right\rbrace ){-}G(\mathcal T)$ and $G$ is submodular.

To prove monotonicity it is enough to show $G(\mathcal S{\cup}\left\lbrace x \right\rbrace ){-}G(\mathcal S)\geq 0$. This time expand the mutual information according to $I(x;y)=H(y){-}H(y\vert x)$:
\begin{align}
G(\mathcal S{\cup}\left\lbrace x \right\rbrace ){-}G(\mathcal S)=H(\boldsymbol{\beta}  \vert \hat{\boldsymbol{f}}_{\mathcal S}){-}H(\boldsymbol{\beta}  \vert \hat{\boldsymbol{f}}_{\mathcal S\cup\left\lbrace x \right\rbrace})
\end{align} 
Conditioning can never increase entropy so $H(\boldsymbol{\beta}  \vert \hat{\boldsymbol{f}}_{\mathcal S}) {\geq} H(\boldsymbol{\beta}  \vert \hat{\boldsymbol{f}}_{\mathcal S\cup\left\lbrace x \right\rbrace})$ and the result follows.
\end{proof}
\subsubsection{Efficient Solvers}
Our optimization problem \eqref{eq:trunc_disc_opt} is the maximization of a monotonic submodular function. A greedy algorithm, shown as Algorithm \ref{al:greedy_max}, solves this problem to within the best possible approximation factor, as the following fundamental theorem states.
\begin{algorithm}
	\caption{Greedy Submodular Maximization}
	\label{al:greedy_max}
	\begin{algorithmic}[1]
		\Function{GreedyMax} {$G(\cdot),\mathcal{V},N$}
		\State $\mathcal S\leftarrow \emptyset$
		\For{$i=1$ to $N$}
		\State $x^{\star}=\text{argmax}_{x\in \mathcal{V}\setminus \mathcal{S}}\,G(\mathcal S\cup \left\lbrace x \right\rbrace )$
		\State $\mathcal S \leftarrow \mathcal S \cup \left\lbrace x^{\star}\right\rbrace $
		\EndFor
		\State Return $\mathcal S$
		\EndFunction
	\end{algorithmic}
\end{algorithm}\\
The following theorem guarantees that the greedy algorithm achieves approximately optimal performance:
\begin{theorem}\label{thm:nemhauser}{(Nemhauser \cite{nemhauser1978analysis})} Let $G$ be a monotonic, submodular set function and $\mathcal{S}^{\star}=\underset{\mathcal S\subseteq \mathcal{V},\vert \mathcal S\vert\leq N}{\text{argmax}}\,G(\mathcal S)$.\\ Let $\mathcal{S}^{\text{gr}}$ be the set retrieved by the greedy maximization Algorithm \ref{al:greedy_max}. We have the following guarantee for the performance of the greedy algorithm:
$$G(\mathcal{S}^{\text{gr}})\geq(1-(1-\frac{1}{N})^{N})G(\mathcal{S}^{\star})\geq (1-\frac{1}{e})G(\mathcal{S}^{\star})$$
	Moreover, no polynomial time algorithm can provide a better approximation guarantee
	unless P=NP \cite{feige1998threshold}.
\end{theorem}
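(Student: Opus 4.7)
The plan is to give the classical Nemhauser--Wolsey--Fisher argument, tailored to our cardinality constraint. Let $\mathcal{S}_0 = \emptyset, \mathcal{S}_1, \ldots, \mathcal{S}_N = \mathcal{S}^{\text{gr}}$ denote the nested sequence of sets produced by Algorithm \ref{al:greedy_max}, and let $\mathcal{S}^{\star}$ be an optimal solution. Note that the claim only concerns the relative performance ratio, and since the mutual information $G(\emptyset) = I(\emptyset; \boldsymbol{\beta}) = 0$, we may work with the convention $G(\emptyset)=0$ without loss.

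The first key step will be to establish a one-step contraction bound for the optimality gap: for every $i \in \{1,\ldots,N\}$,
\begin{align*}
G(\mathcal{S}^{\star}) - G(\mathcal{S}_i) \;\leq\; \bigl(1 - \tfrac{1}{N}\bigr)\bigl[G(\mathcal{S}^{\star}) - G(\mathcal{S}_{i-1})\bigr].
\end{align*}
To prove this, I would first apply monotonicity to write $G(\mathcal{S}^{\star}) \leq G(\mathcal{S}^{\star} \cup \mathcal{S}_{i-1})$, then telescope the right-hand side by adding the elements of $\mathcal{S}^{\star} \setminus \mathcal{S}_{i-1}$ one at a time and invoking submodularity to upper-bound each incremental gain by the marginal of adding that element to $\mathcal{S}_{i-1}$ alone. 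Since $|\mathcal{S}^{\star} \setminus \mathcal{S}_{i-1}| \leq N$ and the greedy rule selects the element $x_i$ maximizing the marginal gain over $\mathcal{S}_{i-1}$, each of those marginals is at most $G(\mathcal{S}_i) - G(\mathcal{S}_{i-1})$. Combining yields $G(\mathcal{S}^{\star}) - G(\mathcal{S}_{i-1}) \leq N \bigl[G(\mathcal{S}_i) - G(\mathcal{S}_{i-1})\bigr]$, which rearranges into the displayed contraction.

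With the one-step bound in place, a simple induction on $i$ gives $G(\mathcal{S}^{\star}) - G(\mathcal{S}_N) \leq (1 - 1/N)^N \, G(\mathcal{S}^{\star})$, and rearranging produces the first inequality in the theorem. The second inequality then follows from the standard elementary estimate $(1 - 1/N)^N \leq 1/e$ valid for all $N \geq 1$. The hardness statement ``no polynomial time algorithm can provide a better approximation guarantee unless P=NP'' is not proved here but imported directly from the cited reference \cite{feige1998threshold}, which establishes a matching inapproximability result for maximum coverage, a special case of monotone submodular maximization under a cardinality constraint.

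The main obstacle is the contraction inequality itself: all the work lies in using submodularity to telescope the difference $G(\mathcal{S}^{\star} \cup \mathcal{S}_{i-1}) - G(\mathcal{S}_{i-1})$ into a sum of singleton marginals that can each be compared against the greedy choice. Once that inequality is in hand, the remainder is a routine geometric iteration and application of the $(1-1/N)^N \leq 1/e$ bound.
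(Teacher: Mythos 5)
Your proof is correct. Note, however, that the paper does not actually prove Theorem \ref{thm:nemhauser} at all: it imports the result wholesale from Nemhauser et al.\ \cite{nemhauser1978analysis} (and the hardness claim from \cite{feige1998threshold}), so there is no in-paper argument to compare against. What you have written is the standard Nemhauser--Wolsey--Fisher proof, and the steps are all sound: monotonicity gives $G(\mathcal{S}^{\star})\leq G(\mathcal{S}^{\star}\cup\mathcal{S}_{i-1})$, the telescoping over $\mathcal{S}^{\star}\setminus\mathcal{S}_{i-1}$ plus submodularity bounds the gap by at most $N$ singleton marginals over $\mathcal{S}_{i-1}$, each dominated by the greedy gain, and the resulting contraction iterates to $(1-1/N)^N$. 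The one hygiene point you correctly anticipated is the normalization $G(\emptyset)=0$, which is needed for the induction to terminate at $\Delta_0=G(\mathcal{S}^{\star})$ and which holds here since $I(\hat{\boldsymbol{f}}_{\emptyset};\boldsymbol{\beta})=0$; you also implicitly use $G(\mathcal{S}_i)-G(\mathcal{S}_{i-1})\geq 0$ (from monotonicity) when replacing the count $\vert\mathcal{S}^{\star}\setminus\mathcal{S}_{i-1}\vert$ by the larger $N$, which is worth stating explicitly. Deferring the inapproximability half to the cited reference is appropriate, as that is a separate reduction-based result and not something one would reprove here.
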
 
%%%% Final approximation bound
Combining the guarantees of Theorem \ref{thm:nemhauser}, Lemma \ref{lemma:1} and \ref{lemma:2} we derive an approximation bound on the original problem \eqref{eq:optimization_problem}:
\begin{corollary}
	\begin{align}
		(1{-}\frac{1}{e})\left[I(\tilde{\boldsymbol{f}}_{\mathcal{S}^{\star}}  ; \left\lbrace \beta_m\right\rbrace){-}N\log\frac{\lambda\sigma_w^2{+}4\delta P(1{+}\delta)N^{\frac32}}{\lambda\sigma_w^2{-}\epsilon\lambda N^{\frac32}}\right] {\leq} I(\hat{\boldsymbol{f}}_{\mathcal{S}^{\text{gr}}}  ; \boldsymbol{\beta})
	\end{align}
\end{corollary}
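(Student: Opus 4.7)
The plan is to chain together the three results cited: Theorem~\ref{thm:nemhauser} compares the greedy output to the discretized optimum; Lemma~\ref{lemma:2} compares the discretized optimum to the continuous-aperture finite-dimensional optimum; and Lemma~\ref{lemma:1} compares the finite-dimensional mutual information to the original infinite-dimensional one, evaluated on any fixed sampling set. A bit of careful bookkeeping is required because the symbol $\mathcal{S}^{\star}$ is overloaded across the three formulations (the original \eqref{eq:optimization_problem}, the truncated \eqref{eq:simplified_opt}, and the discretized \eqref{eq:trunc_disc_opt}); I will write $\mathcal{S}^{\star}$ for the argmax of \eqref{eq:optimization_problem} (which is the one that appears in the corollary), $\mathcal{S}^{\star}_{\text{fin}}$ for the argmax of \eqref{eq:simplified_opt}, and $\mathcal{S}^{\star}_d$ for that of \eqref{eq:trunc_disc_opt}.

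First, I would apply Theorem~\ref{thm:nemhauser} to the monotonic submodular objective of \eqref{eq:trunc_disc_opt} to get $I(\hat{\boldsymbol{f}}_{\mathcal{S}^{\text{gr}}};\boldsymbol{\beta}) \geq (1-1/e)\,I(\hat{\boldsymbol{f}}_{\mathcal{S}^{\star}_d};\boldsymbol{\beta})$. Next, Lemma~\ref{lemma:2} supplies the discretization bound $I(\hat{\boldsymbol{f}}_{\mathcal{S}^{\star}_d};\boldsymbol{\beta}) \geq I(\hat{\boldsymbol{f}}_{\mathcal{S}^{\star}_{\text{fin}}};\boldsymbol{\beta}) - N\log\bigl(1 + \tfrac{4\delta P(1+\delta)N^{3/2}}{\lambda\sigma_w^2}\bigr)$. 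Because $\mathcal{S}^{\star}_{\text{fin}}$ is the maximizer of $I(\hat{\boldsymbol{f}}_{\mathcal{S}};\boldsymbol{\beta})$ over all feasible $\mathcal{S}\subseteq\mathcal{A}$ with $|\mathcal{S}|\le N$, and $\mathcal{S}^{\star}$ is itself such a feasible subset, we get $I(\hat{\boldsymbol{f}}_{\mathcal{S}^{\star}_{\text{fin}}};\boldsymbol{\beta}) \geq I(\hat{\boldsymbol{f}}_{\mathcal{S}^{\star}};\boldsymbol{\beta})$. Finally, applying Lemma~\ref{lemma:1} with the fixed set $\mathcal{S}^{\star}$ (and rearranging the left-hand inequality of the lemma) yields $I(\hat{\boldsymbol{f}}_{\mathcal{S}^{\star}};\boldsymbol{\beta}) \geq I(\tilde{\boldsymbol{f}}_{\mathcal{S}^{\star}};\{\beta_m\}) + N\log\bigl(1 - \tfrac{\epsilon N^{3/2}}{\sigma_w^2}\bigr)$.

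Concatenating these four bounds inside $(1-1/e)\cdot[\,\cdot\,]$ gives the greedy value lower-bounded by $(1-1/e)$ times $I(\tilde{\boldsymbol{f}}_{\mathcal{S}^{\star}};\{\beta_m\}) + N\log\bigl(1-\tfrac{\epsilon N^{3/2}}{\sigma_w^2}\bigr) - N\log\bigl(1+\tfrac{4\delta P(1+\delta)N^{3/2}}{\lambda\sigma_w^2}\bigr)$. The two logarithms are then merged via $\log a - \log b = \log(a/b)$, and after multiplying numerator and denominator by $\lambda\sigma_w^2$ one obtains $N\log\tfrac{\lambda\sigma_w^2 - \epsilon\lambda N^{3/2}}{\lambda\sigma_w^2 + 4\delta P(1+\delta)N^{3/2}} = -N\log\tfrac{\lambda\sigma_w^2 + 4\delta P(1+\delta)N^{3/2}}{\lambda\sigma_w^2 - \epsilon\lambda N^{3/2}}$, which is exactly the log-ratio appearing in the corollary.

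There is no genuinely hard step here; everything is assembly of the previously established inequalities with a single algebraic simplification at the end. The only subtlety worth noting is that Theorem~\ref{thm:nemhauser} delivers a multiplicative $(1-1/e)$ guarantee that is meaningful only when the quantity in brackets is nonnegative; when $\epsilon$ or $\delta$ are so large that the bracket becomes negative the stated bound still holds trivially, since mutual information is nonnegative. One should also verify that $\epsilon < \sigma_w^2 N^{-3/2}$ (the hypothesis of Lemma~\ref{lemma:1}) so that the denominator $\lambda\sigma_w^2 - \epsilon\lambda N^{3/2}$ remains positive and the logarithm is defined, which is implicitly required for the corollary to be substantive.
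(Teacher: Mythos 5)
Your proof is correct and follows exactly the route the paper intends: the corollary is stated there with only the remark that it follows by ``combining the guarantees of Theorem~\ref{thm:nemhauser}, Lemma~\ref{lemma:1} and~\ref{lemma:2},'' and your chain (greedy vs.\ discretized optimum, discretized vs.\ continuous finite-dimensional optimum, feasibility of $\mathcal{S}^{\star}$, then Lemma~\ref{lemma:1} at $\mathcal{S}^{\star}$, followed by merging the logarithms) is precisely that combination, with the notational overloading of $\mathcal{S}^{\star}$ handled carefully. One trivial slip: the bound $I(\hat{\boldsymbol{f}}_{\mathcal{S}^{\star}};\boldsymbol{\beta})\geq I(\tilde{\boldsymbol{f}}_{\mathcal{S}^{\star}};\{\beta_m\})+N\log\bigl(1-\tfrac{\epsilon N^{3/2}}{\sigma_w^2}\bigr)$ comes from the \emph{right}-hand inequality of Lemma~\ref{lemma:1}, not the left-hand one, but the term you actually use is the correct one.
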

%%%%
We hence apply Algorithm \ref{al:greedy_max} to solve our optimization problem \eqref{eq:trunc_disc_opt}. The algorithm runs in time $O(\vert \mathcal{V}\vert N)$, linear in the size of the set $\mathcal{V}$ and the number of selected elements $N$ \cite{minoux1978accelerated} such that it is easily implementable for problems of large size. However, as it turns out even more efficient variants of the algorithm have been introduced and studied. One such variant commonly referred to as the 'lazy greedy' algorithm was studied in \cite{minoux1978accelerated} and it was shown to offer substantial running-time improvements in practice (with an unlikely worst-case theoretical performance upper bounded by that of the conventional greedy algorithm). Our numerical experiments described in Section \ref{sec:numerical_experiments} implement this more efficient variant to reduce running time.
\subsubsection{Improved Approximation Bounds}
While Theorem \ref{thm:nemhauser} guarantees an approximation bound of $(1-\frac{1}{e})\approx63\%$ for the efficient greedy algorithm this guarantee is not tight. It is possible to derive a tighter data-dependent online bound on the gap between the cost of the greedy solution $G(\mathcal{S}^{\text{gr}})$ and that of the optimal solution $G(\mathcal{S}^{\star})$ \cite{leskovec2007cost}. Specifically, we have:
\begin{align} \label{eq:online_bound}
G(\mathcal{S}^{\text{gr}}){\geq} G(\mathcal{S}^{\star}){-}\max_{B:\vert B\vert \leq N}\sum_{e\in B} \left( G(\mathcal{S}^{\text{gr}}{\cup}\left\lbrace e\right\rbrace ){-}G(\mathcal{S}^{\text{gr}})\right) 
\end{align}
which takes $O(\vert \mathcal{V}\vert\log \vert \mathcal{V}\vert)$ evaluations of $G(\mathcal S)$ to compute and sort. We use \eqref{eq:online_bound} to improve the distance from optimality bound in some of our numerical solutions in Section \ref{sec:numerical_experiments}.
\subsection{Design Example: A Simple Ideal Setting} \label{sec:design_example}
In the previous subsections we formulated the array design problem in a setting with constraints on the aperture $\mathcal A$ and the number of sensors $N$ and showed how a greedy algorithm (Algorithm \ref{al:greedy_max}) is guaranteed to efficiently find an approximate solution.

Here, we study a particular instance of that problem, where the Signal to Noise Ratio (SNR) is high, and the aperture is effectively unconstrained (the $N$ sensors may be placed anywhere on the real line). Under these conditions a truncated $\frac{\lambda}{2}$-spaced array is traditionally considered the design of choice in the conventional non-Bayesian setting (this is a truncated version of the infinite $\frac{\lambda}{2}$-spaced design mentioned in Section \ref{sec:far_field_imaging}). We show next that the truncated $\frac{\lambda}{2}$-spaced design naturally emerges as the approximately optimal solution as retrieved by our schemes in Bayesian settings where the a-priori $\boldsymbol{\beta}$ distribution satisfies some conditions. Specifically, we have the following result:
\begin{theorem} \label{thm:uniform_array}
	Consider the high SNR regime $\frac{P}{\sigma_w^2}\rightarrow \infty $ and assume the prior from \eqref{eq:prior_gaussian} takes a symmetric, monotonically decreasing form, i.e. $\sigma_m^2=\sigma_{-m}^2$ and $\sigma_{m_1}^2\geq \sigma_{m_2}^2$ whenever $0\leq m_1<m_2$. In addition, take $\mathcal{V}$ as an arbitrarily dense set of sampling points on $\mathbb{R}$, and $\mathcal{M}=-M,\ldots,M$ with $M\rightarrow \infty$.\\
	We then have that a greedy solver on \eqref{eq:trunc_disc_opt} will return a length $N$, $\frac{\lambda}{2}$-spaced truncated uniform array centered around $x=0$. 
\end{theorem}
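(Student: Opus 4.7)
The plan is to analyze the greedy selection step by step in the prescribed limits, exploiting two structural facts. First, any placement on the half-wavelength grid $x=\tfrac{\lambda}{2}k$ with $k\in\mathbb{Z}$ collapses the corresponding row of $\boldsymbol{K}$ to a standard basis vector, since $K_{nm}=\text{sinc}(m+k)=\delta_{m,-k}$, so the measurement isolates a single Fourier coefficient $\beta_{-k}$. Second, the sinc partition-of-unity identity $\sum_{m\in\mathbb{Z}}\text{sinc}^2(m+y)=1$ for every $y\in\mathbb{R}$ lets us bound the marginal gain at any candidate position by that of the best on-grid candidate.

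I would first reduce the greedy rule. Using the expressions of Section \ref{sec:reconstruction}, the marginal gain of adding $x$ to $\mathcal{S}$ equals $\log(1+q(x;\mathcal{S})/\sigma_w^2)$, where $q(x;\mathcal{S})\equiv\boldsymbol{k}(x)^{\dagger}\hat{\boldsymbol{\Sigma}}(\mathcal{S})\boldsymbol{k}(x)$, $\boldsymbol{k}(x)_m=\text{sinc}(m+\tfrac{2x}{\lambda})$, and $\hat{\boldsymbol{\Sigma}}(\mathcal{S})$ is the posterior covariance of $\boldsymbol{\beta}$ given $\hat{\boldsymbol{f}}_{\mathcal{S}}$. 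Since the log is monotone in $q$, greedy selects the $x$ maximizing $q(x;\mathcal{S})$. Because the prior is diagonal and an on-grid measurement touches exactly one coordinate, $\hat{\boldsymbol{\Sigma}}(\mathcal{S})$ remains diagonal along any trajectory that stays on the grid, with entries $\hat\sigma_m^2(\mathcal{S})$; the identity then gives $q(x;\mathcal{S})=\sum_m \hat\sigma_m^2(\mathcal{S})\,\text{sinc}^2(m+\tfrac{2x}{\lambda}) \leq \max_m \hat\sigma_m^2(\mathcal{S})$, with equality attained by $x=-\tfrac{\lambda}{2}m^{\star}$ for any $m^{\star}$ achieving the maximum.

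I would then induct on the step index $k$. The base case $\mathcal{S}=\emptyset$ gives $\hat\sigma_m^2=\sigma_m^2$, whose maximum under the symmetric monotonically decreasing assumption is $\sigma_0^2$ at $m^{\star}=0$, so greedy places the first sensor at $x=0$ and measures $\beta_0+w_0$. In the high-SNR limit the posterior variance on this coefficient collapses to order $\sigma_w^2$, strictly below every other $\sigma_m^2$. Suppose inductively that after $k$ steps greedy has picked $\lambda/2$-spaced positions measuring a symmetric index set $I_k\subset\mathbb{Z}$ consisting of the $k$ largest-variance coefficients; then $\hat\sigma_m^2(\mathcal{S})=\sigma_m^2$ for $m\notin I_k$ and is negligible for $m\in I_k$. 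The next greedy pick is $x=-\tfrac{\lambda}{2}m^{\star}$ where $m^{\star}\notin I_k$ attains $\max_{m\notin I_k}\sigma_m^2$, a position again on the grid, and the symmetry $\sigma_m^2=\sigma_{-m}^2$ preserves the symmetry of $I_{k+1}$. After $N$ steps the selected positions are $\{\tfrac{\lambda}{2}k:k\in K_N\}$ for a symmetric $K_N\subset\mathbb{Z}$ of the $N$ integers of smallest absolute value, i.e., a length-$N$, $\lambda/2$-spaced truncated uniform array centered at $x=0$.

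The main obstacle is ruling out off-grid maximizers so that the induction never leaves the grid. For non-integer $y=2x/\lambda$ one has $\text{sinc}^2(m+y)>0$ for every $m\in\mathbb{Z}$, so equality $q(x;\mathcal{S})=\max_{m}\hat\sigma_m^2(\mathcal{S})$ would require $\hat\sigma_m^2(\mathcal{S})=\max_{m'}\hat\sigma_{m'}^2(\mathcal{S})$ for every $m$; this is impossible because summability of the prior variances forces $\hat\sigma_m^2(\mathcal{S})\to 0$ while the maximum is strictly positive, so strict inequality holds off the grid. The limits should be taken in the right order: fix $\sigma_w^2$ and $M$, prove the strict off-grid inequality and run the induction, then send $\sigma_w^2\to 0$ and $M\to\infty$, with Lemma \ref{lemma:1} controlling the truncation error. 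Ties between $\sigma_m^2$ and $\sigma_{-m}^2$ permit tie-breaking between mirror-image grid positions, but the resulting configuration is in every case a truncated uniform $\lambda/2$-spaced array centered at the origin.
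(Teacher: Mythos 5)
Your proof is correct and follows essentially the same route as the paper's: reduce the greedy marginal gain to a conditional variance, use the sinc orthogonality / partition-of-unity identity $\sum_m \mathrm{sinc}(m+a)\,\mathrm{sinc}(m+b)=\mathrm{sinc}(b-a)$ to show that on-grid placements decouple the Fourier coefficients and to bound the gain by the largest remaining prior variance, then induct on the greedy steps. If anything, your explicit equality-case argument ruling out off-grid maximizers (all $\mathrm{sinc}^2$ terms positive for non-integer offsets, so equality would force all posterior variances to equal the maximum, contradicting summability) is more careful than the paper's appendix, which asserts the on-grid optimum without addressing strictness.
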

\begin{proof}
	 See Appendix \ref{app:uniform_array}.
\end{proof}
The last theorem studies one class of simple idealized problems where the greedy solution is reminiscent of generic non-Bayesian array designs. However, notice that our formalism is also useful in more challenging design problems such as when the aperture $\mathcal A$ takes on arbitrary forms, and the effects of noise and application-tailored priors are considered. Furthermore, in Section \ref{sec:matroid_constraints} we demonstrate that additional combinatorial constraints may be naturally incorporated into our formulation such that even more challenging problems become manageable.
\section{Array Design with Matroid Constraints} \label{sec:matroid_constraints}
In Section \ref{sec:array_design} we formalized the array design problem in a setting where we imposed constraints on the aperture and the number of sensors. Specifically, the constraints were $\mathcal S\subseteq \mathcal{V},\,\vert \mathcal S\vert \leq N$. In many practical scenarios these may be too simplistic to accurately represent real world design constraints. For example in applications where sensors are heavy and mounted on support beams we may want to restrict the number of sensors in specific sections of the aperture.

In this section we briefly review key elements from matroid theory which is a branch in combinatorics \cite{oxley2006matroid} and survey results from submodular optimization with matroid constraints guaranteeing the existence of efficient approximate solvers for this class of problems. We continue to show that these mathematical structures may be utilized to impose constraints of interest in array design enriching the set of problems our Bayesian formulation can describe and solve.
\subsection{Submodular Optimization with Matroid Constraints}
We begin by defining matroids and their corresponding independent sets \cite{nemhauser1978analysis}:
\begin{definition} \label{sec:matroid_def_opt}
	A finite \textbf{matroid} $M$ is a pair $(\mathcal{V},\mathcal{I})$ where $\mathcal{V}$ is a ground set and $\mathcal{I}$ is a collection of subsets of $\mathcal{V}$ (the \textbf{independent sets}) that satisfies the following properties:
	\begin{enumerate}
		\item The empty set is independent: $\emptyset\in \mathcal{I}$
		\item A subset of an independent set is independent:\\ $X\subset Y,\, Y\in \mathcal{I}\quad \Rightarrow \quad X\in \mathcal {I}$
		\item If $X$ is an independent set and $Y$ is a larger indepedent set, $X$ can be augmented to a larger independent set by adding an element from $Y{\setminus} X$:\\
		$X,Y{\in} \mathcal{I},\, \vert X\vert{<}\vert Y\vert \quad \Rightarrow \quad \exists e\in Y{\setminus} X \,\,\,\, \text{s.t.} \,\,\,\, X{\cup}\left\lbrace e \right\rbrace \in \mathcal{I}$
	\end{enumerate}
\end{definition}
A matroid structure may be used to classify subsets of a ground set $\mathcal{V}$ into permissible subsets which belong to $\mathcal{I}$ and non permissible subsets which do not belong to $\mathcal{I}$. In the next subsection we show that using this formalism we can express interesting array design constraints.

From the theory of submodular optimization we have the following results for submodular optimization with matroid constraints \cite{krause2012submodular}, \cite{calinescu2011maximizing}. Let $M=(\mathcal{V},\mathcal{I})$ be a matroid and $G(\mathcal S)$ a monotonic, submodular set function. There exists an efficient approximate solver for the problem $\text{argmax}_{\mathcal S\in \mathcal{I}}\, G(\mathcal S)$. Specifically, a greedy solver (maximizing the immediate marginal benefit at each step) taking the form 
\begin{align}
\mathcal{S}^{\text{gr}}{\leftarrow} \mathcal{S}^{\text{gr}} \cup \left\lbrace \underset{e:e\notin \mathcal{S}^{\text{gr}},\mathcal{S}^{\text{gr}}\cup \left\lbrace e\right\rbrace \in \mathcal{I}}{\text{argmax}}\left[ G(\mathcal{S}^{\text{gr}}{\cup} \left\lbrace e\right\rbrace){-}G(\mathcal{S}^{\text{gr}})\right] \right\rbrace 
\end{align}
and stopping when no more elements $e$ can be added is guaranteed to achieve a half-approximation bound: 
\begin{align}
G(\mathcal{S}^{\text{gr}})\geq \frac12\, \underset{\mathcal S\in \mathcal{I}}{\text{max}} \,G(\mathcal S).
\end{align}
The constant factor may be tightened to $\left( 1-\frac{1}{e}\right)$ by utilizing specialized randomized algorithms \cite{calinescu2011maximizing}.
\subsection{Combinatorial Constraints in Array Design} \label{sec:comb_con_arr_des}
Here we invoke one specific well known matroid structure and show its application in expressing useful array design constraints. Let $\mathcal{V}$ be a ground set of grid points where sensors are allowed to be placed as before. Let $\mathcal{V}_1,\ldots,\mathcal{V}_K$ be a partition of the set $\mathcal{V}$, i.e. $\,\,\, \bigcup\limits_k\mathcal{V}_k=\mathcal{V}, \,\,\,\, \mathcal{V}_i \bigcap \mathcal{V}_j = \emptyset, \, \forall i\neq j$, and let $N,n_1,\ldots,n_K$ be a set of integers.

We define the (cardinality constrained) partition matroid \cite{calinescu2011maximizing}  $M=(\mathcal{V},\mathcal{I})$ with the following definition for the collection of independent sets: a subset $\mathcal S {\subseteq} \mathcal V$ is an independent subset $\mathcal S{\in}\mathcal{I}$ if it holds $\vert \mathcal S{\cap} \mathcal{V} \vert \leq N,\,\vert \mathcal S{\cap} \mathcal{V}_j \vert \leq n_j ,\,\, \forall j$.

In the context of array design the partition matroid may be useful in expressing practical constraints over sensor placement configurations. For example if the subsets $\mathcal{V}_i$ represent closed line sections, e.g. a physical partitioning of the aperture into zones, and $\mathcal{I}$ represents the collection of all permissible designs then the structure of the matroid limits the number of sensors that may be placed in the $i$\textit{th} zone to $n_i$ which may be an important engineering constraint coupled with some specific application. We solve:
\begin{align}
\mathcal{S}^{\star}=\text{argmax}_{\mathcal{S}\in \mathcal{I}}\, I(\hat{\boldsymbol{f}}_{\mathcal S}  ; \boldsymbol{\beta})
\end{align}
Applying the results from the previous subsection we immediately have an efficient approximate solver for the array design problem coupled with a partition matroid constraint. In Section \ref{sec:numerical_experiments} we detail such a design for one numerical example.

\section{Numerical Experiments} \label{sec:numerical_experiments}
In this section we perform numerical experiments validating our theoretical results and exemplifying them. We showcase an array design with cardinality and aperture constraints as prescribed in Section \ref{sec:array_design} and design arrays adhering to matroid constraints as prescribed in Section \ref{sec:matroid_constraints}.

Our initial setting is as follow. We fix $\lambda=1$ throughout as the wavelength only serves to scale the $x$ axis. The aperture is set as $\mathcal A=\left\lbrace x\vert-3.5\leq x \leq 3.5\right\rbrace $ and the selection set $\mathcal{V}$ is chosen as a uniform grid of $113$ positions from $\mathcal A$ spaced $\delta=0.0625$ apart.
We set out to design an array consisting of $N=11$ sensor locations. The prior for $\left\lbrace \beta_m \right\rbrace $ is set as per \eqref{eq:prior_def} with $r=1$ and normalized to sum to $P=1$. For the simulations we consider the truncated vector $\boldsymbol{\beta}$ formed when restricting the set of $m$ coefficients to $901$ consecutive elements centered around the origin, i.e., we set $\mathcal{M}=\left\lbrace{-450},\ldots,{+}450\right\rbrace $. For the preliminary design we implement the lazy greedy algorithm of Section \ref{sec:array_design} and plot the results in the left column of Fig.~\ref{fig:cont_aperture} as a function of the SNR which we define here as $\frac{P}{N\sigma_w^2}$. Blue markers denote the full selection set $\mathcal{V}$ and red markers delineate the active $\mathcal S$ selected by the algorithm.\\ 
\begin{figure*}[!t]
	\centering
	\includegraphics[width=1\textwidth]{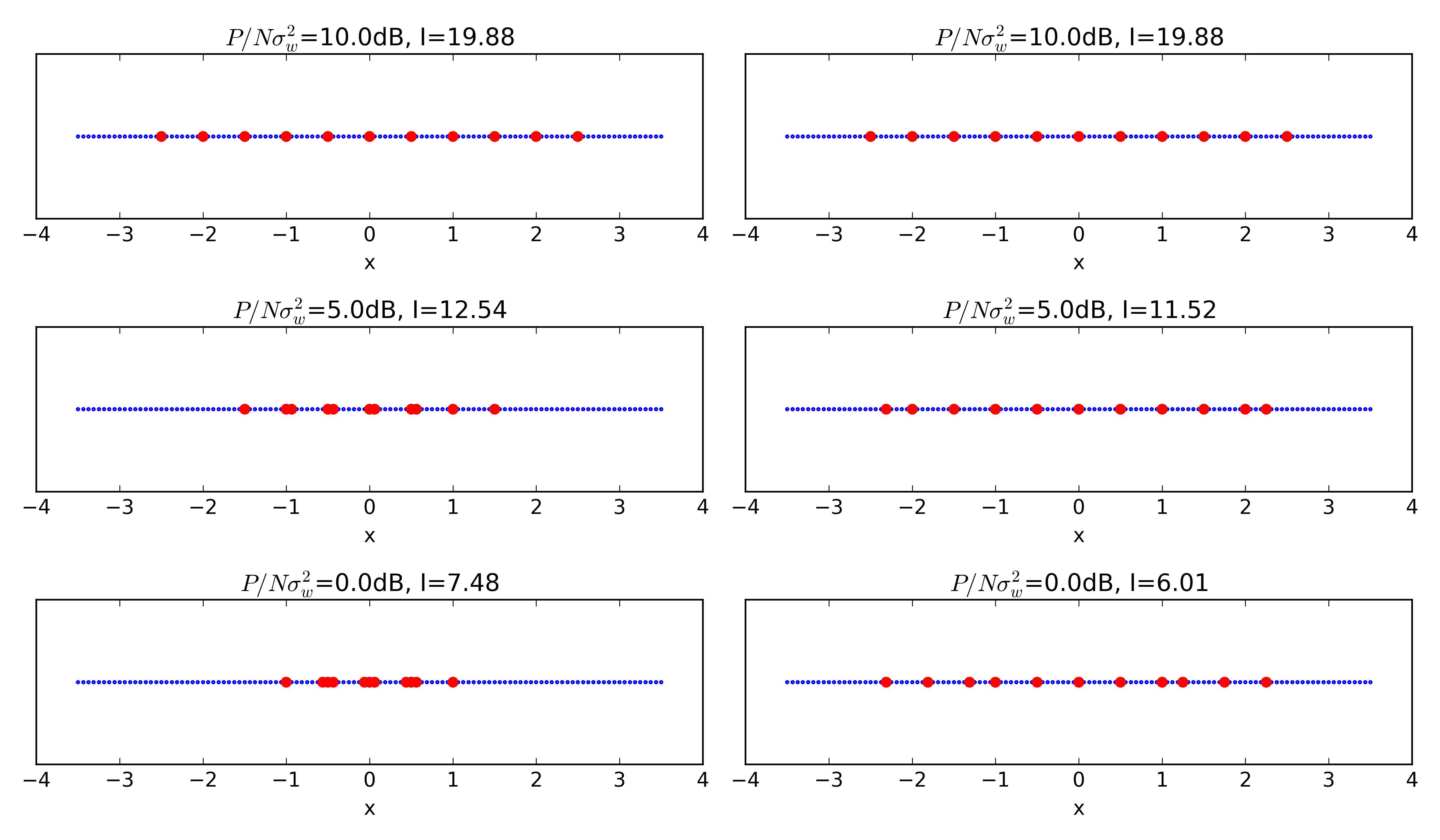}
	\caption{(left) Array designs with an aperture constraint and various SNR levels. (right) Array designs with combinatorial placement constraints and various SNR levels.} \label{fig:cont_aperture}	
\end{figure*}
In the high SNR regime ($\text{SNR}=10\text{dB}$ or higher values) the resulting design is a truncated $\frac{\lambda}{2}$ uniform array as predicted according to Theorem \ref{thm:uniform_array}. As the SNR decreases the reliability of the measurements deteriorates and the algorithm prefers locating samplers right next to each other on expense of widening the array as this serves to average out the noise.

The performance in terms of mutual information $I(\hat{\boldsymbol{f}}_{\mathcal S}  ; \boldsymbol{\beta})$ for the selected locations $\mathcal S$ appears in the title of the plots (in natural units). Notice for example that for the $5\text{dB}$ SNR design the achieved mutual information is $12.54$. Using Theorem \ref{thm:nemhauser} we have that the optimal design cannot achieve mutual information better than $\frac{1}{1{-}\frac{1}{e}}12.54=19.83$. This bound can be improved using the improved bounding method briefly described following Theorem \ref{thm:nemhauser} to show that the optimal performance is not greater than $17.45$.

The truncation level dictated by our choice of $\mathcal{M}$ translates to $\epsilon\cong 1e{-}4$ and the truncation bounds from Lemma \ref{lemma:1} read (for the lower, extreme SNR case): ${-}0.45 {\leq} I(\tilde{\boldsymbol{f}_{\mathcal S}}  ; \left\lbrace \beta_m\right\rbrace )-I(\hat{\boldsymbol{f}_{\mathcal S}}  ; \boldsymbol{\beta} ) {\leq} 0.47$. We find empirically that these bounds are extremely loose and $\mathcal{M}$ can be shrunk considerably without substantially compromising accuracy. To achieve a similar upper bound on $I(\hat{\boldsymbol{f}}_{\mathcal{S}^{\star}}  ; \boldsymbol{\beta})-I(\hat{\boldsymbol{f}}_{\mathcal{S}^{\star}_d}  ; \boldsymbol{\beta})$ as per Lemma \ref{lemma:2} a discretization level of $\delta\cong 2e{-}4$ is needed. However, we empirically find that our choice of $\delta=0.0625$ is accurate enough as further refining the grid does not significantly change the design. Our lemmas prove to be extremely pessimistic as is expected given that the proofs take into account worst-case scenarios.

The array geometries above, derived according to the formulations of Section \ref{sec:array_design}, are designed to optimize the quality of inference between the measurements and the scene expansion coefficients $\boldsymbol{\beta}$. Many sensing applications of interest specifically involve imaging the scene, that is reconstructing $\beta(\psi)$ from the measurements. Our next experiment was designed to empirically evaluate the Mean Square Error (MSE) performance in scene reconstruction from measurements collected using the prescribed designs. 
First, we designed five array geometries as described above, optimized for several target SNR levels $\left\lbrace30\text{dB},12\text{dB},10\text{dB},5\text{dB},0\text{dB} \right\rbrace$.
We set up a Monte-Carlo experiment where $1000$ scenes were randomly drawn from the distribution of Section \ref{secsub:prior}. For each scene, noisy measurements were collected by each of the five optimized arrays. The measurements were repeated with five different synthetic noise levels corresponding to the five target SNR levels.

We repeatedly performed maximum likelihood estimation \cite{adali2011complex} of the expansion coefficients $\boldsymbol \beta$, and synthesized an estimated scene $\hat{\beta}(\psi)$ according to \eqref{eq:beta_f_expansion}. The MSE discrepancy between $\hat{\beta}(\psi)$ and the true scene is depicted in Fig.~\ref{fig:mse_comparison}. It is evident that the quality of inference criterion is indicative of MSE performance, as each of the five geometries yielded the best MSE performance at its specified target SNR level.\\
\begin{figure}[t]
	\centering
	\includegraphics[width=0.5\textwidth]{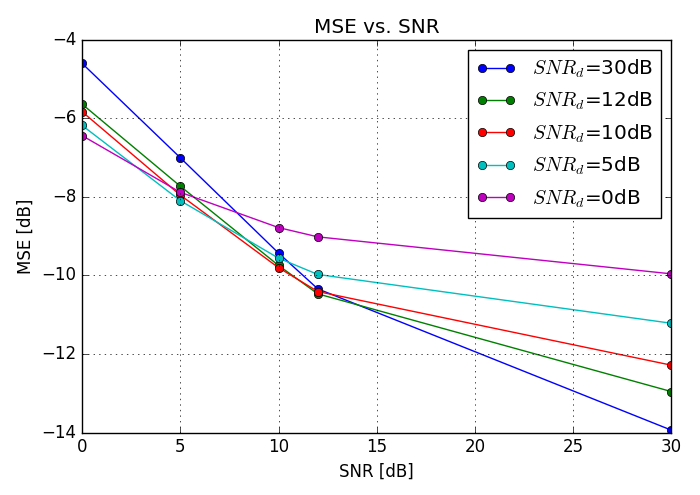}
	\caption{Reconstruction MSE for various arrays designed for fixed SNR levels, with actual SNR levels swept.} \label{fig:mse_comparison}	
\end{figure}
Next we solve a corresponding set of design problems with matroid constraints installed to limit the number of sensors in given aperture segments. Specifically, we use the (cardinality constrained) partition matroid from Section \ref{sec:comb_con_arr_des} with $N=11, n_i=1, \forall i$ and $\mathcal{V}_i$ spanning consecutive line segments of length $0.5$: $\mathcal{V}_i{=}[{-}0.25{+}0.5{\cdot} i,{+}0.25{+}0.5{\cdot} i)$. The matroid constraints limits the proximity between sensor elements, which may be useful in real world use cases. We plot the results for the matroid constrained designs in the right column of Fig.~\ref{fig:cont_aperture}. Notice that while the theoretical guarantees pertaining to the greedy matroid optimization scheme of Section \ref{sec:matroid_constraints} is $0.5$ compared to $1-\frac{1}{e}$ with the cardinality constraints of Section \ref{sec:array_design}, the actual performance achieved in the constrained design instances is not far from those achieved with the simple cardinality constraints.
\section{Concluding remarks} \label{sec:conclusions}
We introduced a novel framework for designing sensor arrays. Our setting is Bayesian in the sense that our model incorporates the notion of prior beliefs about the scene of interest. Our goal is designing arrays that are adapted to the scene and perform efficient inference. We showed that this NP-hard combinatorial selection problem may be efficiently approximated by porting results from the theory of submodular set function optimization.

Initially, we showed how to apply our formalism in design problems with straightforward cardinality and aperture constraints. Later we showed that more challenging combinatorial constraints may be enforced by utilizing results and concepts from the field of matroids and submodular optimization with matroid constraints.

Our formulation connecting the array design problem and submodularity may further be extended to reap additional benefits, namely tackling problems such as robust array design or adaptive design that evolve as the scene is learned. We leave the treatment of some of these and other problems to separate study \cite{shulkind2017mulwav}.
\appendices
\section{Proof of Lemma \ref{lemma:1}} \label{app:proof_of_truncation_lemma}
For simplicity of notation, we suppress the subscript $\mathcal S$ throughout. Begin by expanding the mutual information expressions:  
	\begin{align}
	I(\tilde{\boldsymbol{f}}  ; \left\lbrace \beta_m\right\rbrace )&=H(\tilde{\boldsymbol{f}})-H(\tilde{\boldsymbol{f}}\vert \left\lbrace \beta_m\right\rbrace)	\notag\\
	I(\hat{\boldsymbol{f}}  ; \boldsymbol{\beta} )&=H(\hat{\boldsymbol{f}})-H(\hat{\boldsymbol{f}}\vert \boldsymbol{\beta})	
	\end{align}
	Examining \eqref{eq:full_model} and \eqref{eq:appx_model} we have $H(\tilde{\boldsymbol{f}}\vert \left\lbrace \beta_m\right\rbrace)=H(\hat{\boldsymbol{f}}\vert \boldsymbol{\beta})=H(\boldsymbol{w})$ and so: 
	\begin{align}
	I(\tilde{\boldsymbol{f}}  ; \left\lbrace \beta_m\right\rbrace )-I(\hat{\boldsymbol{f}}  ; \boldsymbol{\beta} )=H(\tilde{\boldsymbol{f}})-H(\hat{\boldsymbol{f}})
	\end{align}
	Next, notice that $\tilde{\boldsymbol{f}}$ and $\hat{\boldsymbol{f}}$ are both circular, complex, Gaussian random $N$-length vectors, such that their entropies are given according to \cite{neeser1993proper}:
	\begin{align} \label{eq:entropies}
	H(\tilde{\boldsymbol{f}})&=\log((\pi e)^{N} \text{det}(\tilde{\boldsymbol \Sigma}))& & \tilde{\Sigma}_{ij}=\mathbb{E}[\tilde{\boldsymbol{f}}_i\tilde{\boldsymbol{f}}_j^{*}]\notag\\
	H(\hat{\boldsymbol{f}})  &=\log((\pi e)^{N} \text{det}(\hat{\boldsymbol \Sigma}))& & \hat{\Sigma}_{ij}=\mathbb{E}[\hat{\boldsymbol{f}}_i\hat{\boldsymbol{f}}_j^{*}]
	\end{align}
	Further expand using the independence between $\boldsymbol w$ and $\beta_m$ and $\mathbb{E}[\beta_m\beta^{*}_{m'}]=\delta_{mm'}\sigma^2_m$:
	\begin{align}
	\tilde{\Sigma}_{ij}&=\mathbb{E}[(\sum\limits_{m}K_{im}\beta_m+w_i)(\sum\limits_{m'}K^{*}_{jm'}\beta^{*}_{m'}+w_j^{*})]\notag\\
	&=[\boldsymbol\Sigma_{ww}]_{ij}+\sum\limits_{m}K_{im}K^{*}_{jm}\sigma^2_m\label{eq:sigma_1}\\ 
	\hat{\Sigma}_{ij}&=\mathbb{E}[(\sum\limits_{m\in \mathcal{M} }K_{im}\beta_m+w_i)(\sum\limits_{m' \in \mathcal{M}}K^{*}_{jm'}\beta^{*}_{m'}+w_j^{*})]\notag\\
	&=[\boldsymbol\Sigma_{ww}]_{ij}+\sum\limits_{m\in \mathcal{M}}K_{im}K^{*}_{jm}\sigma^2_m	\label{eq:sigma_2}
	\end{align}
	Comparing \eqref{eq:sigma_1} and \eqref{eq:sigma_2} and using:  
	\begin{align}
	\left\lvert \sum\limits_{m\notin \mathcal{M}}K_{im}K^{*}_{jm}\sigma^2_m \right\rvert &\leq \sum\limits_{m\notin \mathcal{M}}\left\lvert K_{im}K^{*}_{jm}\sigma^2_m\right\rvert \leq  \sum\limits_{m\notin \mathcal{M}} \sigma^2_m =\epsilon
	\end{align}
	we have $\vert\tilde{\Sigma}_{ij}-\hat{\Sigma}_{ij}\vert \leq \epsilon$ such that we may write:
	\begin{align} \label{eq:cov_pert}
	\hat{\boldsymbol \Sigma}=\tilde{\boldsymbol \Sigma}+\epsilon\boldsymbol X
	\end{align}
	for some $N \times N$ matrix $\boldsymbol X$ satisfying $\left\lvert X_{ij}\right\rvert \leq 1$. We use \eqref{eq:cov_pert} to bound the determinants. $\tilde{\boldsymbol \Sigma}$ is positive-definite and invertible such that we can write: 
	\begin{align} \label{eq:cov_matrix_trunc}
	\text{det}(\hat{\boldsymbol \Sigma})=\text{det}(\tilde{\boldsymbol \Sigma}+\epsilon\boldsymbol X)=\text{det}(\tilde{\boldsymbol \Sigma})\text{det}(\boldsymbol {I}_N+\epsilon \tilde{\boldsymbol \Sigma}^{-1} \boldsymbol X )
	\end{align} 
	Substituting \eqref{eq:cov_matrix_trunc} in \eqref{eq:entropies} we have:
	\begin{align}
	H(\hat{\boldsymbol{f}})&{-}H(\tilde{\boldsymbol{f}})=\log(\text{det}(\boldsymbol{I} _N{+}\epsilon \tilde{\boldsymbol \Sigma}^{-1} \boldsymbol X  )){=}\log(\text{det}(\tilde{\boldsymbol X} ))
	\end{align}
	 with $\tilde{\boldsymbol X}\equiv \boldsymbol I_N+\epsilon \tilde{\boldsymbol \Sigma}^{-1}\boldsymbol X$. We turn next to bounding the term $\log(\text{det}(\tilde{\boldsymbol X} ))$. First notice:
	\begin{align} 
	&\left\lvert[\epsilon \tilde{\boldsymbol \Sigma}^{-1} \boldsymbol X]_{ij}\right\rvert=\epsilon\left\lvert \sum\limits_{m}\tilde{\Sigma}^{-1}_{im}X_{mj}\right\rvert \leq \epsilon \sum\limits_{m} \left\lvert\tilde{\Sigma}^{-1}_{im}X_{mj}\right\rvert \notag\\
	&\leq \epsilon \sum\limits_{m} \left\lvert\tilde{\Sigma}^{-1}_{im}\right\rvert 
	\leq \epsilon \|\tilde{\boldsymbol \Sigma}^{-1} \|_{\infty}\leq \epsilon\sqrt{N} \|\tilde{\boldsymbol \Sigma}^{-1} \|_{2}\notag\\
	&=\epsilon\sqrt{N} \sigma_{\text{max}}(\tilde{\boldsymbol \Sigma}^{-1})=\epsilon\sqrt{N} \frac{1}{\sigma_{\text{min}}(\tilde{\boldsymbol \Sigma})}\leq \frac{\epsilon\sqrt{N}}{\sigma_{w}^{2}}
	\end{align}
	Where we have used the matrix norm equivalence $\| \boldsymbol A\|_{\infty}\leq \sqrt{N} \|\boldsymbol A \|_{2}$ (for $N {\times} N$ matrices) and $\sigma_{\text{max}}(\cdot)$ ($\sigma_{\text{min}}(\cdot)$) is the maximal (minimal) singular value such that $\sigma_{\text{min}}(\tilde{\boldsymbol \Sigma})\geq \sigma_w^2$.
	Thus we have that $\tilde{\boldsymbol X}$ has diagonal elements centered around 1: $\left\lvert\tilde{X}_{ii}-1\right\rvert\leq \frac{\epsilon\sqrt{N}}{\sigma_{w}^{2}} $ and the row-sums over non-diagonal entries satisfy $\sum\limits_{m\neq i} \left\lvert\tilde{X}_{im} \right\rvert \leq \frac{\epsilon\sqrt{N}(N-1)}{\sigma_{w}^{2}}$. 
	
	Applying the Gershgorin circle theorem we have for the eigenvalues of $\tilde{\boldsymbol X} $:
	$$1-\frac{\epsilon\sqrt{N}N}{\sigma_{w}^{2}} \leq \vert\lambda_i\vert  \leq 1+\frac{\epsilon\sqrt{N}N}{\sigma_{w}^{2}}$$
	$\text{det}(\tilde{\boldsymbol X} )$ is a positive real number as the quotient of the determinants of two positive definite matrices	$\text{det}(\tilde{\boldsymbol X} )=\frac{\text{det}(\hat{\boldsymbol \Sigma})}{\text{det}(\tilde{\boldsymbol \Sigma})}$
	such that we can write $\text{det}(\tilde{\boldsymbol X} )=\prod\nolimits_i \lambda_i=\prod\nolimits_i \vert \lambda_i \vert$ and consequently:
	\begin{align}
	N\log(1-\frac{\epsilon N^{\frac32}}{\sigma_{w}^{2}})  \leq\log(\text{det}(\tilde{\boldsymbol X} ))\leq N\log(1+\frac{\epsilon N^{\frac32}}{\sigma_{w}^{2}})
	\end{align}
	This finally leads to:
	\begin{align}
	 {-} N\log(1{+}\frac{\epsilon  N^{\frac32}}{\sigma_{w}^{2}}) \leq I(\tilde{\boldsymbol{f}}  ; \left\lbrace \beta_m\right\rbrace ) {-}I(\hat{\boldsymbol{f}}  ; \boldsymbol{\beta} )\leq{-}N\log(1{-}\frac{\epsilon  N^{\frac32}}{\sigma_{w}^{2}}) 
	\end{align}
\section{Proof of Lemma \ref{lemma:2}} \label{app:proof_of_discretization_lemma}
The left inequality is trivial. We have $I(\hat{\boldsymbol{f}}_{\mathcal{S}_d^{\star}}  ; \boldsymbol{\beta}) \leq I(\hat{\boldsymbol{f}}_{\mathcal{S}^{\star}}  ; \boldsymbol{\beta})$ as the second optimization is over a larger set.

To prove the right inequality we show that for every $\mathcal{S}\subseteq \mathcal A$ there is $\mathcal{S}_d \subseteq \mathcal{V}$ such that 
  \begin{align} \label{eq:upper_bound}
  I(\hat{\boldsymbol{f}}_{\mathcal S}  ; \boldsymbol{\beta})\leq I(\hat{\boldsymbol{f}}_{\mathcal{S}_d}  ; \boldsymbol{\beta}) {+} N\log(1{+}\frac{4\delta P (1{+}\delta) N^{\frac32}}{\lambda\sigma_{w}^{2}})
  \end{align}
we will show this for $\vert \mathcal S \vert {=} N$ but the proof for other cardinalities is identical.

With distance $\delta$ between adjacent elements of $\mathcal{V}$, for every $\mathcal S{=}\left\lbrace x_1,\ldots,x_{N} \right\rbrace \subset \mathcal A$ there is a set $\mathcal{S}_d{=}\left\lbrace x_1^d,\ldots,x_{N}^d\right\rbrace$ such that $\mathcal{S}_d \subseteq \mathcal{V}$ and $\vert x_i - x_i^d\vert \leq \frac{\delta}{2}$ for all $i$. We have, similarly to Appendix \ref{app:proof_of_truncation_lemma}: 
\begin{align}
I(\hat{\boldsymbol{f}}_{\mathcal S}  ; \boldsymbol{\beta})-I(\hat{\boldsymbol{f}}_{\mathcal{S}_d};\boldsymbol{\beta})=H(\hat{\boldsymbol{f}}_{\mathcal S})-H(\hat{\boldsymbol{f}}_{\mathcal {S}_d})
\end{align}
Using the model \eqref{eq:appx_model}:
\begin{align} 
H(\hat{\boldsymbol{f}}_{\mathcal S})&=\log((\pi e)^{N} \text{det}(\hat{\boldsymbol \Sigma}^{\mathcal S}))\notag\\
H(\hat{\boldsymbol{f}}_{\mathcal{S}_d}) &=\log((\pi e)^{N} \text{det}(\hat{\boldsymbol \Sigma}^{\mathcal{S}_d})) 
\end{align}
where:
\begin{align} \label{eq:cov_mat_def}
\hat{\boldsymbol \Sigma}^{\mathcal S}&\equiv\mathbb{E}[\hat{\boldsymbol{f}}_{\mathcal S}\hat{\boldsymbol{f}}_{\mathcal S}^{\dagger}] =\boldsymbol{K}_{\mathcal S}\boldsymbol{\Sigma}_{\beta\beta}\boldsymbol{K}_{\mathcal S}^{\dagger}+\boldsymbol{\Sigma}_{ww}\notag\\
\hat{\boldsymbol\Sigma}^{\mathcal{S}_d}&\equiv\mathbb{E}[\hat{\boldsymbol{f}}_{\mathcal{S}_d}\hat{\boldsymbol{f}}_{\mathcal{S}_d}^{\dagger}] =\boldsymbol{K}_{\mathcal{S}_d}\boldsymbol{\Sigma}_{\beta\beta}\boldsymbol{K}_{\mathcal{S}_d}^{\dagger}+\boldsymbol{\Sigma}_{ww}\notag\\
\boldsymbol{\Sigma}_{\beta\beta}&\equiv\mathbb{E}[\boldsymbol{\beta}\boldsymbol{\beta}^{\dagger}]
\end{align}
	and $H(\hat{\boldsymbol{f}}_{\mathcal S})-H(\hat{\boldsymbol{f}}_{\mathcal {S}_d})=\log(\text{det}(\hat{\boldsymbol \Sigma}^{\mathcal{S}}))-\log(\text{det}(\hat{\boldsymbol \Sigma}^{\mathcal{S}_d}))$.\\
Both $\boldsymbol{K}_{\mathcal{S}}$ and $\boldsymbol{K}_{\mathcal{S}_d}$ are size $N\times\vert \mathcal{M}\vert$ matrices defined as per the definition in \eqref{eq:K_definition}, such that:
\begin{align}
&\vert[\boldsymbol{K}_{\mathcal{S}}]_{nm}{-}[\boldsymbol{K}_{\mathcal{S}_d}]_{nm}\vert=\vert \text{sinc}(m{+}\frac{2}{\lambda}x_n){-}\text{sinc}(m{+}\frac{2}{\lambda}x_n^d)\vert\notag\\
&\leq 2\frac{2}{\lambda}\vert x_q{-}x_q^d \vert\leq \frac{2}{\lambda}\delta
\end{align}
where for the first inequality we have used the fact that $\text{sinc}(\cdot)$ is Lipschitz with constant smaller than $2$.
We can thus define $\boldsymbol\Delta\equiv\boldsymbol{K}_{\mathcal{S}}-\boldsymbol{K}_{\mathcal{S}_d}$ and we have $\vert\Delta_{nm}\vert\leq \frac{2}{\lambda}\delta$.
Substitution in \eqref{eq:cov_mat_def} yields:
\begin{align}
\hat{\boldsymbol \Sigma}^{\mathcal{S}}=\hat{\boldsymbol\Sigma}^{\mathcal{S}_d}+\boldsymbol\Delta  \boldsymbol{\Sigma}_{\beta\beta} \boldsymbol\Delta^{\dagger} + \boldsymbol\Delta \boldsymbol{\Sigma}_{\beta\beta} \boldsymbol{K}_{\mathcal{S}_d}^{\dagger} + \boldsymbol{K}_{\mathcal{S}_d}\boldsymbol{\Sigma}_{\beta\beta}\boldsymbol\Delta^{\dagger}
\end{align}
We bound the perturbation terms by noticing:
\begin{align}
&\left\lvert [\boldsymbol\Delta  \boldsymbol{\Sigma}_{\beta\beta} \boldsymbol\Delta^{\dagger} ]_{ij} \right\rvert =\left\vert\sum\limits_m\Delta_{im}[\boldsymbol{\Sigma}_{\beta\beta}]_{mm}\Delta_{jm}\right\vert \notag\\ &\leq(\frac{2}{\lambda}\delta)^2\sum\limits_m[\boldsymbol{\Sigma}_{\beta\beta}]_{mm}
\leq \frac{4}{\lambda^2}\delta^2 P\notag\\
&\left\lvert [\boldsymbol\Delta  \boldsymbol{\Sigma}_{\beta\beta} \boldsymbol{K}_{\mathcal{S}_d}^{\dagger} ]_{ij} \right\rvert =\left\vert\sum\limits_m\Delta_{im}[\boldsymbol{\Sigma}_{\beta\beta}]_{mm} [\boldsymbol{K}_{\mathcal{S}_d}]_{jm}\right\vert \notag\\ &\leq\frac{2}{\lambda}\delta\cdot 1\cdot\sum\limits_m[\boldsymbol{\Sigma}_{\beta\beta}]_{mm}\leq\frac{2}{\lambda}\delta P
\end{align}
 and overall we have:
 \begin{align}
 \vert\hat{\Sigma}^{\mathcal S}_{ij}-\hat{\Sigma}^{\mathcal{S}_d}_{ij}\vert \leq \frac{4}{\lambda}\delta P+\frac{4}{\lambda^2}\delta^2 P=\frac{4}{\lambda}\delta P (1+\delta)
 \end{align}
 define: $\epsilon'\equiv \frac{4}{\lambda}\delta P (1+\delta)$ and we have 
\begin{align}
\hat{\boldsymbol \Sigma}^{\mathcal S}=\hat{\boldsymbol \Sigma}^{\mathcal{S}_d}+\epsilon'\boldsymbol X
\end{align}
 with $N \times N$ matrix $\boldsymbol X$ satisfying $\left\lvert X_{ij}\right\rvert \leq 1$ which is akin to \eqref{eq:cov_pert}. We thus port the results from Appendix \ref{app:proof_of_truncation_lemma} here (we only need the lower bound):
 \begin{align}
 - N\log(1+\frac{\epsilon'  N^{\frac32}}{\sigma_{w}^{2}})\leq I(\hat{\boldsymbol{f}}_{\mathcal{S}_d}  ; \boldsymbol{\beta})-I(\hat{\boldsymbol{f}}_{\mathcal S}  ; \boldsymbol{\beta})
 \end{align}
which, upon substitution of $\epsilon'$ is equivalent to \eqref{eq:upper_bound}.
  
\section{Proof of Theorem \ref{thm:uniform_array}} \label{app:uniform_array}
The greedy algorithm sequentially selects elements according to the rule $x^{\star}=\text{argmax}_{x\in \mathcal{V}\setminus \mathcal{\mathcal S}}\,I(\hat{\boldsymbol{f}}_{\mathcal S{\cup} \left\lbrace x \right\rbrace}  ; \boldsymbol{\beta})$ where $\mathcal S$ is the set of elements selected so far. We recursively show that the added elements can be selected on a $\frac{\lambda}{2}$-spaced grid centered around $x=0$. Expanding the mutual information as in Appendix \ref{app:proof_of_truncation_lemma} we have:
\begin{align}
\text{argmax}_{x\in \mathcal{V}\setminus \mathcal{S}}\,I(\hat{\boldsymbol{f}}_{\mathcal S{\cup} \left\lbrace x \right\rbrace}  ; \boldsymbol{\beta}){=}\text{argmax}_{x\in \mathcal{V}\setminus \mathcal{S}}\, H(\hat{\boldsymbol{f}}_{\mathcal S{\cup} \left\lbrace x \right\rbrace})
\end{align}
We begin by showing that the first selected element is $x_1=0$. Indeed, using the results from Appendix \ref{app:proof_of_truncation_lemma}:
\begin{align}
\underset{x\in \mathcal{V}}{\text{argmax}}\, H(\hat{\boldsymbol{f}}_{\left\lbrace x \right\rbrace})=\underset{x\in \mathcal{V}}{\text{argmax}}\,\text{det}(\hat{\boldsymbol \Sigma}_{11})=\underset{x\in \mathcal{V}}{\text{argmax}}\,\hat{\boldsymbol \Sigma}_{11}
\end{align}
where again using Appendix \ref{app:proof_of_truncation_lemma} and under the assumptions of the theorem (high SNR):
\begin{align} \label{eq:sigma_11_opt}
\hat{\boldsymbol \Sigma}_{11}&=\sum\limits_{m}K_{1m}K^{*}_{1m}\sigma^2_m=\sum\limits_{m}\text{sinc}^2(m{+}\frac{2}{\lambda}x)\sigma^2_m\notag\\
&\leq\sigma^2_0\sum\limits_{m}\text{sinc}^2(m{+}\frac{2}{\lambda}x)=\sigma^2_0
\end{align}
where we used $\sigma_m^2\leq \sigma_0^2,\quad \forall m\neq 0$ and the identity:
\begin{align}
\sum\limits_m \text{sinc}(m{+}a) \text{sinc}(m{+}b)=\text{sinc}(b{-}a)
\end{align}
It easy to see that in \eqref{eq:sigma_11_opt} equality is achieved for the choice $x_1=0$ which is the claim.

Next, assume that the greedy algorithm has already picked a set $\mathcal S=\left\lbrace x_1,\ldots,x_{\vert \mathcal S \vert}\right\rbrace $ of adjacent elements on a $\frac{\lambda}{2}$-spaced grid centered around $x=0$ and show that the next element to be added is an adjacent location on the same $\frac{\lambda}{2}$-spaced grid. We have using $H(x,y){=}H(x){+}H(y\vert x)$:
\begin{align}
\underset{x\in \mathcal{V}\setminus \mathcal{S}}{\text{argmax}}\, H(\hat{\boldsymbol{f}}_{\mathcal S{\cup} \left\lbrace x \right\rbrace}){=}\underset{x\in \mathcal{V}\setminus \mathcal{S}}{\text{argmax}}\, H(\hat{\boldsymbol{f}}_{\left\lbrace x \right\rbrace}\vert \hat{\boldsymbol{f}}_{\mathcal S}){=}\underset{x\in \mathcal{V}\setminus \mathcal{S}}{\text{argmax}}\, \sigma_{x\vert \mathcal S}^2
\end{align}
where $\sigma_{x\vert \mathcal S}^2$ is the conditional variance of the Gaussian observation collected at $x$ given the Gaussian observations made at the set $\mathcal S$:
\begin{align} \label{eq:cov_calc_cond}
\sigma_{x\vert \mathcal S}^2=\sigma_{x}^2-\boldsymbol\Sigma_{x \mathcal S}\boldsymbol\Sigma_{\mathcal S \mathcal S}^{-1}\boldsymbol\Sigma_{x \mathcal S}^{\dagger}
\end{align}
with the usual definitions for the covariance matrices (as in Appendix \ref{app:proof_of_truncation_lemma}):
\begin{align}
[\boldsymbol\Sigma_{x \mathcal S}]_{1i}&{=}\sum\limits_m \text{sinc}(m{+}\frac{2}{\lambda}x)\text{sinc}(m{+}\frac{2}{\lambda}x_i)\sigma^2_m{=}\text{sinc}(m(i){+}\frac{2}{\lambda}x)\sigma_{m(i)}^2\notag\\
[\boldsymbol\Sigma_{\mathcal S \mathcal S}]_{ij}&{=}\sum\limits_{m}\text{sinc}(m{+}\frac{2}{\lambda}x_i)\text{sinc}(m{+}\frac{2}{\lambda}x_j)\sigma^2_m{=}\delta_{ii}\sigma_{m(i)}^2
\end{align}
where in the last equations almost all $\text{sinc}(\cdot)$ functions nulled out as the $x_i$'s are situated on a $\frac{\lambda}{2}$ grid, and we have defined $m(i){\equiv} {-}\frac{2}{\lambda}x_i$, such that $I\equiv\left\lbrace m(i)\right\rbrace$ is a set of consecutive integers.
Additionally, we have:
\begin{align}
\sigma_x^2=\sum\limits_m \text{sinc}^2(m{+}\frac{2}{\lambda}x)\sigma^2_m
\end{align}
Substituting back into \eqref{eq:cov_calc_cond} we have:
\begin{align}
\sigma_{x\vert \mathcal S}^2&=\sum\limits_m \text{sinc}^2(m{+}\frac{2}{\lambda}x)\sigma^2_m-\sum\limits_{i\in I} \text{sinc}^2(m(i){+}\frac{2}{\lambda}x)\sigma^2_{m(i)}\notag\\
&=\sum\limits_{m\notin I} \text{sinc}^2(m{+}\frac{2}{\lambda}x)\sigma^2_m
\end{align}
and this is similar to the optimization over the selection of the first location in \eqref{eq:sigma_11_opt} with the optimum achieved by selecting $x$ such that $\frac{2}{\lambda}x=m$ for the first $m\notin I$ which is the next adjacent location on the $\frac{\lambda}{2}$ grid which completes the proof.
\bibliographystyle{IEEEtran}
\bibliography{Bayesian_imaging_bib}

%\begin{IEEEbiographynophoto}{Gal Shulkind}
%	Biography text here.
%\end{IEEEbiographynophoto}
%
%\begin{IEEEbiographynophoto}{Stefanie Jegelka}
%	Biography text here.
%\end{IEEEbiographynophoto}
%
%\begin{IEEEbiographynophoto}{Gregory W. Wornell}
%	Biography text here.
%\end{IEEEbiographynophoto}

\end{document}